\newcommand{\dvol}{\mathrm{dvol}}
\newcommand{\e}{\varepsilon}
\newcommand{\norm}[1]{\left\lVert#1\right\rVert}
\newcommand{\cle}{\lesssim}
\newcommand{\R}{\mathbb{R}}
\newcommand{\geucl}{g_{\mathbb{E}}}
\newcommand{\rn}{d_{n}}
\newcommand{\dvolg}{\dvol_{\geucl}}
\newcommand{\Schw}{\mathrm{Schw}}
\renewcommand{\P}{\mathcal{P}}
\newcommand{\mE}{m_{\mathrm{eff}}}
\renewcommand{\H}{\mathcal{H}}
\newcommand{\ADM}{\mathrm{ADM}}
\newtheorem{theorem}{Theorem}
\newtheorem{definition}{Definition}
\newtheorem{remark}{Remark}
\newtheorem{lemma}{Lemma}
\renewcommand{\epsilon}{\varepsilon}
\renewcommand{\H}{\mathcal{H}}
\theoremstyle{plain}
\title{On Superposition of Relativistic point-sources}
\author{Noah Benjamin}
\author{Will McDermott}
\author{Iva Stavrov Allen}
\address{Lewis \& Clark College}
\email{wmcdermott@lclark.edu, istavrov@lclark.edu}
\begin{document}

\begin{abstract} The non-linearity of general relativity makes it at least difficult if not impossible to view a relativistic cloud of matter as being made up of point-source constituents. Perhaps the most delicate issue to circumnavigate is the inherent lack of the classical notion of superposition. Even if one were to believe that the recent framework developed by the first and the third author in \cite{NI} leads to an appropriate interpretation of the phrase ``initial data for a point-source", there is prima facie no reason to believe that it lends itself to a principle of superposition. In this paper we propose an extension of said framework which serves as a de-facto superposition of point-sources and which recovers Brill-Lindquist metrics in the limit. We also show that our proposal can be seen as a continuous extension of the classical superposition principle of Newtonian gravity. This paper fits within a larger program of representing relativistic clouds of matter as cumulative effects of point-sources.
\end{abstract}

\maketitle

\section{Introduction}
\subsection{Relativistic Poisson problem}\label{intro-RPP}
In this paper we are concerned with asymptotically Euclidean, conformally flat, time-symmetric initial data for  compactly supported, smooth clouds of relativistic dust on $\R^3$. Within such a framework the Einstein constraint equations reduce to the Hamiltonian constraint relating scalar curvature to the matter content. This constraint is typically formulated in terms of matter density of the dust cloud. In place of matter density we express matter (dust) by means of a $3$-form $\omega$. Our decision to use forms in place of densities is motivated by the fact that the concept of matter density inherently involves the concepts of metric and volume. As a result, employment of matter density in the formulation of the constraints makes it impossible to view the constraints as equations which inform us about the geometric responses to presence of matter. By expressing the ``amount" of matter present in a metric \emph{independent} way, such as the one in Definition \ref{matterdist:defn} below, we are able to frame the solutions to the constraints as responses to presence of matter. For additional benefits of using $3$-forms for conveying matter see Remark \ref{oldformulationisbad} and Section \ref{heuristics-subsection}. 

\begin{definition}[Matter Distribution]\label{matterdist:defn}
By matter distribution we mean a $3$-form $\omega = \phi\, \dvolg$ on $\R^3$ where $\phi \geq$ 0 is smooth, compactly supported, and not identically equal to zero.
\end{definition}

The Hamiltonian constraint for the metric $g_\omega$ corresponding to the matter distribution $\omega$ reads as 
$$R(g_\omega)\dvol_{g_\omega}=32\pi \frac{G}{2c^2} \omega.$$
From now on we set $g_\omega=\theta^4 \geucl$. The asymptotic conditions which ensure asymptotically Euclidean data are
$$\left|\partial_x^l\!\left(\theta(x)-1\right)\right|=O(|x|^{-|l|-1}),\ \ |x|\to \infty.$$
We see from $R(\theta^4\geucl)=-8\theta^{-5}\Delta_{\geucl}\theta$ that  
the Hamiltonian constraint is equivalent to 
\begin{equation}\label{TheEqn}
\theta \Delta_{\geucl}\theta \dvol_{\geucl}=-4\pi \tfrac{G}{2c^2} \omega \text{\ \ i.e\ \ } \theta \Delta_{\geucl}\theta =-4\pi \tfrac{G}{2c^2}\phi.
\end{equation}
It is interesting to note that the collection of papers by Arnowitt, Deser and Misner (e.g \cite{ZeroMass, ADM}) also featured the Hamiltonian constraint in this form.

\begin{definition}[RPP]\label{RPP:defn}
A Relativistic Poisson Problem (RPP) is the equation 
\begin{equation*}
\theta \Delta_{\geucl} \theta\, \dvol_{\geucl}= -4\pi \tfrac{G}{2c^2}  \omega
\end{equation*}
paired with an asymptotic boundary condition
\begin{equation*}
\lim_{|x| \to \infty} \theta (x) = b \ \ \text{where} \ b \geq 0.
\end{equation*}
When the asymptotic boundary condition is not explicitly mentioned the reader should assume $b=1$.
\end{definition}

The existence and uniqueness of solutions of RPP is addressed in Proposition 1, Proposition 8, and Remark 1 of \cite{NI}. Overall, we have the following result. 

\begin{theorem}\label{ThetaF:prop}
Let $\omega$ be a matter distribution. For each $b \geq 0 $, there exists a unique, smooth, and positive solution $\theta$ of
\begin{equation*}
\theta \Delta_{\geucl}\, \theta\dvolg  = -4\pi \tfrac{G}{2c^2} \omega , \  \lim_{|x| \to \infty} \theta(x) = b.
\end{equation*}
Furthermore, the solution $\theta$ satisfies the asymptotic conditions \begin{equation}\label{asymptotic-conditions}
\left|\partial ^l_x\left(\theta(x)-b\right)\right|\cle |x|^{-1-|l|} \text{\ \ as\ \ } |x|\to \infty.
\end{equation}
\end{theorem}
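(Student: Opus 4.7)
The strategy is to recast RPP as a semilinear Poisson problem and solve it by sub-/super-solution methods, then extract the stated regularity and decay from the resulting integral representation. Wherever $\theta>0$, dividing \eqref{TheEqn} by $\theta$ yields
$$\Delta_{\geucl}\theta=-4\pi\tfrac{G}{2c^{2}}\,\frac{\phi}{\theta},$$
and convolving with the Newtonian kernel $(4\pi|x|)^{-1}$ rewrites RPP as the fixed-point equation
\begin{equation}\label{fixedpt}
\theta(x)=b+\tfrac{G}{2c^{2}}\int_{\R^{3}}\frac{\phi(y)}{\theta(y)\,|x-y|}\,dy.
\end{equation}
The compact support of $\phi$, combined with a positive lower bound for $\theta$ on $\mathrm{supp}\,\phi$, is precisely what makes \eqref{fixedpt} well-defined and keeps the nonlinearity $\phi/\theta$ under control.

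For the case $b>0$ I would first produce a solution by sub/super-solutions. The constant $\theta^{-}=b$ is a sub-solution, and
$$\theta^{+}(x)=b+\tfrac{G}{2c^{2}b}\int_{\R^{3}}\frac{\phi(y)}{|x-y|}\,dy$$
is a super-solution, since $\theta^{+}\ge b$ gives $\Delta_{\geucl}\theta^{+}=-4\pi\tfrac{G}{2c^{2}}\phi/b\le -4\pi\tfrac{G}{2c^{2}}\phi/\theta^{+}$. Monotone iteration of \eqref{fixedpt} on the order interval $[\theta^{-},\theta^{+}]$ (or a Schauder fixed-point argument) produces a solution in this interval, automatically satisfying $\theta\ge b>0$. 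With $\phi/\theta\in C_{c}^{\infty}$ in hand, Schauder bootstrap applied to $\Delta_{\geucl}\theta=-4\pi\tfrac{G}{2c^{2}}\phi/\theta$ upgrades $\theta$ to $C^{\infty}$. For uniqueness, the difference $w=\theta_{1}-\theta_{2}$ of any two solutions satisfies $(\Delta_{\geucl}-V)w=0$ with $V=4\pi\tfrac{G}{2c^{2}}\phi/(\theta_{1}\theta_{2})\ge 0$ compactly supported and $w\to 0$ at infinity; the strong maximum principle for $\Delta_{\geucl}-V$ forces $w\equiv 0$.

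The asymptotic estimates \eqref{asymptotic-conditions} follow by differentiating \eqref{fixedpt} under the integral sign. Choosing $R$ with $\mathrm{supp}\,\phi\subset B_{R}$, for $|x|\ge 2R$ one has $|x-y|\asymp |x|$ for $y\in\mathrm{supp}\,\phi$ and $\bigl|\partial_{x}^{l}(|x-y|^{-1})\bigr|\cle |x-y|^{-1-|l|}$; together with the a priori bound $\int_{\R^{3}}\phi/\theta<\infty$ this yields $\bigl|\partial_{x}^{l}(\theta-b)\bigr|\cle|x|^{-1-|l|}$.

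The main obstacle I foresee is the $b=0$ case, where the super-solution $\theta^{+}$ blows up as $b\to 0$ and there is no universal positive lower bound on $\theta$. My plan is to take a decreasing sequence $b_{n}\downarrow 0$, use comparison to extract monotonicity of the associated $\theta_{b_{n}}$, and pass to the pointwise limit. The delicate point is to show the limit is non-trivial, i.e., to secure a uniform positive lower bound on $\theta_{b_{n}}$ on compact sets. Once this is done, positivity on all of $\R^{3}$ follows from the strong maximum principle applied to the superharmonic, non-constant limit, and smoothness and asymptotic decay go through as before. This limiting scheme is exactly what Propositions 1, 8 and Remark 1 of \cite{NI} carry out in detail.
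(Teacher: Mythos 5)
Your argument is essentially correct, but it takes a genuinely different route from the one the paper relies on. The paper defers Theorem \ref{ThetaF:prop} to \cite{NI} and, in Section \ref{GPP:sec}, proves the generalized ($\alpha$-dependent) version by iterating the integral operator $T\theta = b + \tfrac{G}{2c^2}\int \tfrac{\phi}{|x-\xi|\,\theta^{\alpha}}$ starting from $\theta_0\equiv 1$. The key structural fact there is that $T$ is \emph{order-reversing}, so the iterates interlace ($\theta_0\le\theta_2\le\cdots\le\theta_3\le\theta_1$), the even and odd subsequences converge to limits $\theta_-\le\theta_+$, and the real work is the Jensen-plus-strong-maximum-principle argument forcing $\theta_-=\theta_+$. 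This is the one place where your write-up is imprecise: because $T$ is antitone rather than monotone, ``monotone iteration on the order interval $[\theta^-,\theta^+]$'' does not directly converge to a fixed point; it only produces such a coupled pair of quasi-solutions, and closing that gap is exactly the hard step in the paper's proof. Your parenthetical Schauder alternative does rescue the argument: $T$ maps the convex order interval $[b,\theta^+]$ into itself (your sub- and super-solutions are verified correctly), and elliptic estimates give the compactness needed for a fixed point, after which your bootstrap, uniqueness-via-$(\Delta_{\geucl}-V)$, and decay estimates all go through as stated. So the Schauder route buys a shorter existence proof that avoids the $\theta_-=\theta_+$ identification entirely, at the cost of not producing the explicit approximating sequence the paper later reuses.

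For $b=0$ you correctly identify the delicate point but leave it open. It closes in one line from what you already have: the comparison argument gives $\theta_{b}\le\theta_{1}$ uniformly in $b\in(0,1]$, so the representation formula yields
\begin{equation*}
\theta_{b}(x)\;\ge\;\frac{G}{2c^{2}}\int_{\R^{3}}\frac{\phi(y)}{|x-y|\,\theta_{1}(y)}\,dy\;>\;0,
\end{equation*}
a positive lower bound independent of $b$ on every compact set (in particular on $\mathrm{supp}\,\phi$, which is all that is needed to control the nonlinearity in the limit). This is precisely the mechanism behind the paper's estimate \eqref{lowerbound} in the proof of Theorem \ref{AngryIvaThm}; with it, your limiting scheme $b_n\downarrow 0$ is complete.
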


\begin{remark}\label{oldformulationisbad}
Had we formulated the Hamiltonian constraint as $R(g_\omega)=16\pi \frac{G}{c^2} \phi$ with $\phi$ denoting the energy \underline{density}, we would have been lead to the problem 
\begin{equation}\label{oldRPP}
\theta^{-5} \Delta_{\geucl} \theta  = -4\pi \tfrac{G}{2c^2} \phi , \  \lim_{|x| \to \infty} \theta(x) =1.
\end{equation}
The exponent on $\theta$ makes all the difference: in contrast to Theorem \ref{ThetaF:prop} the problem \eqref{oldRPP} does not permit positive solutions when $\phi$ is relatively large. Specifically, for $\phi$ which satisfy 
$$\frac{G}{2c^2}\int_\xi \frac{\phi(\xi)}{|x-\xi|}\dvolg \ge 1 \text{\ \ for all\ \ } x\in \mathrm{supp}(\phi)$$
one can use an iterative, inductive argument based on Green's Representation Formula 
$$\theta(x)=1+\frac{G}{2c^2} \int_\xi \frac{\phi(\xi)}{|x-\xi|}\theta^5(\xi)\dvolg$$
to show that any solution $\theta$ to \eqref{oldRPP} must satisfy $\theta\ge 2^{5^n}$ for all $n$, at least over the support of $\phi$.
\end{remark}

\subsection{Physical interpretations}\label{heuristics-subsection}
There are at least two additional good reasons for viewing the equation \eqref{TheEqn} as a relativistic counterpart to the Poisson equation of classical, Newtonian gravity. First and foremost, note that the Ansatz
$$\theta=1+\tfrac{G}{2c^2}u$$
converts \eqref{TheEqn} to
$$\Delta_{\geucl}u+ O\left(\frac{1}{2c^2}\right)=-4\pi G \phi;$$
as a result, the Newtonian limit of the equation \eqref{TheEqn} is the classical Poisson equation for the gravitational potential $u$. Secondly, there is a heuristic argument in favor of \eqref{TheEqn} based on the equivalence of mass and energy. A system of sources of \emph{effective} masses $m_i$ at locations $p_i$ would have the total mass-energy of 
$$\sum_i m_i c^2+\sum_{i\neq j} G\frac{m_im_j}{|p_i-p_j|}.$$
At first glance, it may appear that the latter could be reformulated as the statement that the (total) \emph{bare} mass of the system is given by  
\begin{equation}\label{CheatingEqn}
\begin{aligned}
\omega_{\mathrm{bare}}=&\sum_i m_i\delta_{p_i}(x) + \frac{G}{2c^2} \sum_{ij}  \frac{m_i m_j}{|x-p_j|}\delta_{p_i}(x)\\
=&\left(1+\frac{G}{2c^2}\sum_j\frac{m_j}{|x-p_j|}\right)\sum_i m_i\delta_{p_i}(x),
\end{aligned}
\end{equation}
where $\delta_{p_i}$ denotes the Dirac delta distribution with center at $p_i$. Substituting 
$$\theta=1+\tfrac{1}{2c^2}u \text{\ \ for\ \ } u(x)=\sum_jG\frac{m_j}{|x-p_j|}$$
into \eqref{CheatingEqn}, and observing that 
$$\Delta_{\geucl}\theta=-4\pi \tfrac{G}{2c^2} \sum_i m_i\delta_{p_i}$$
yields \eqref{TheEqn}. As we discuss below (see, for example, the self-interaction term in \eqref{meet-delta-over-r:eqn}) there is an illuminating error in \eqref{CheatingEqn}.  Nonetheless, the perspective presented in \eqref{CheatingEqn} makes clear the following:
\begin{itemize}
\item The solutions $\theta=1+\tfrac{1}{2c^2}u$ of \eqref{TheEqn} serve as generalizations of the gravitational potential from classical gravity.
\medbreak
\item The equation \eqref{TheEqn} de-facto decomposes bare mass into the sum of effective mass (corresponding to $\Delta_{\geucl} \theta$) and interaction energy (corresponding to $(\theta-1)\Delta_{\geucl}\theta$).
\medbreak
\item The coupling of $\theta$ and $\Delta_{\geucl}\theta$ in \eqref{TheEqn} is modeling gravitational interaction. 
\end{itemize}
The last observation can serve as a basis for explorations of one-parameter families of theories which continuously interpolate between Newtonian and relativistic dust; see Section \ref{furthercontext:sec} below.

\subsection{Brill-Lindquist metrics}\label{intro-BL}
The following example offers some crucial insights, although nominally it does not fit the conditions of our Definition \ref{matterdist:defn}. The example is a vacuum example, is defined on $\mathbb{R}^3\smallsetminus\{p_1, ..., p_Q\}$ and features the metric
\begin{equation}\label{BL-maineqn}
g_{BL}=\left(1+\frac{G}{2c^2}\sum_{i=1}^Q \frac{a_i}{|x-p_i|}\right)^4 \geucl,\ \ a_i>0.
\end{equation}
Metrics of the form \eqref{BL-maineqn} were studied in great detail in the work of Brill and Lindquist. Specifically, in \cite{BL} it is argued that metrics \eqref{BL-maineqn} describe a cloud of particles of (effective) masses $a_i$ located at $p_i$. 

Under certain assumptions on the separations $|p_i-p_j|$ and coefficients $a_i$ (for example, see \cite{SormaniStavrov}) the geometry of Brill-Lindquist metrics is as indicated in Figure \ref{fig1}. The front row of point-sources in Figure \ref{fig1} illustrates the fact that the minimal surfaces associated with individual particles, as well as the lengths of ``individual necks", depend on the mutual relationship between the values of $a_i$ and $p_i$.  

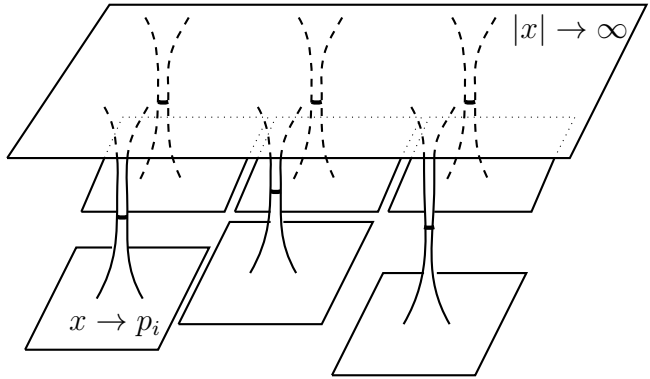
\begin{figure}[h]
\centering
\begin{tikzpicture}[scale=.68]
\draw[thick] (-8.5,-6) to (2.5, -6) to (4,-3) to (-6.5, -3) to (-8.5, -6);
\node[right] at (1.15,-3.5) {$|x|\to \infty$};

\draw[thick] (-8.15,-9.75) to (-5.55, -9.75) to (-4.55,-7.75) to (-6.15, -7.75); 
\draw[thick] (-6.45, -7.75) to (-7.15, -7.75) to (-8.15, -9.75);
\draw[thick, dashed] (-6.6, -5) to [out=-60, in=95] (-6.35, -6);
\draw[thick] (-6.35, -6) to [out=-85, in=90] (-6.35, -6.75) to [out=-90, in=60](-6.75, -8.75);
\draw[thick, dashed] (-5.75, -5) to [out=-120, in=85] (-6.15, -6);
\draw[thick] (-6.15, -6) to [out=-95, in=90] (-6.15, -7) to [out=-90, in=120](-5.85, -8.75);
\draw[ultra thick] (-6.35,-7.15) to [out=-15, in=-165] (-6.15,-7.15);
\node[right] at (-7.5,-9.3) {$x\to p_i$};

\draw[thick] (-5.15,-9.25) to (-2.35, -9.25) to (-1.35,-7.25) to (-3.1, -7.25); 
\draw[thick] (-3.35, -7.25) to (-4.15, -7.25) to (-5.15, -9.25);
\draw[thick, dashed] (-3.6, -5) to [out=-60, in=95] (-3.35, -6);
\draw[thick] (-3.35, -6) to [out=-85, in=90] (-3.35, -6.25) to [out=-90, in=60](-3.75, -8.25);
\draw[thick, dashed] (-2.75, -5) to [out=-120, in=85] (-3.15, -6);
\draw[thick] (-3.15, -6) to [out=-95, in=90] (-3.15, -6.5) to [out=-90, in=120](-2.85, -8.25);
\draw[ultra thick] (-3.35,-6.65) to [out=-15, in=-165] (-3.15,-6.65);

\draw[thick] (-2.15,-10.25) to (0.65, -10.25) to (1.65,-8.25) to (-0.1, -8.25); 
\draw[thick] (-0.35, -8.25) to (-1.15, -8.25) to (-2.15, -10.25);
\draw[thick, dashed] (-0.6, -5) to [out=-60, in=90] (-0.35, -6);
\draw[thick] (-0.35, -6) to [out=-90, in=90] (-0.35, -6.25) to [out=-90, in=60](-0.75, -9.25);
\draw[thick, dashed] (0.25, -5) to [out=-120, in=90] (-0.15, -6);
\draw[thick] (-0.15, -6) to [out=-90, in=90] (-0.15, -6.5) to [out=-90, in=120](0.15, -9.25);
\draw[ultra thick] (-0.35,-7.35) to [out=-15, in=-165] (-0.15,-7.35);

\draw[thick] (-7.05,-7.05) to (-6.35, -7.05);
\draw[thick] (-6.15, -7.05) to (-4.25, -7.05) to (-3.8, -6);
\draw[dotted] (-3.8, -6) to (-3.4,-5.2) to (-6.25, -5.2) to (-6.6, -6);
\draw[thick] (-6.6, -6) to (-7.05, -7.05);
\draw[thick, dashed] (-5.8, -3.25) to [out=-60, in=95] (-5.55, -4.25) to [out=-85, in=90] (-5.55, -4.5) to [out=-90, in=60](-5.95, -6.5);
\draw[thick, dashed] (-4.95, -3.25) to [out=-120, in=85] (-5.35, -4.25) to [out=-95, in=90] (-5.35, -4.75) to [out=-90, in=120](-5.05, -6.5);
\draw[ultra thick] (-5.55,-4.9) to [out=-15, in=-165] (-5.35,-4.9);

\draw[thick] (-4.05,-7.05) to (-3.35, -7.05);
\draw[thick] (-3.15, -7.05) to (-1.25, -7.05) to (-0.8, -6);
\draw[dotted] (-0.8, -6) to (-0.4,-5.2) to (-3.25, -5.2) to (-3.6, -6);
\draw[thick] (-3.6, -6) to (-4.05, -7.05);
\draw[thick, dashed] (-2.8, -3.25) to [out=-60, in=95] (-2.55, -4.25) to [out=-85, in=90] (-2.55, -4.5) to [out=-90, in=60](-2.95, -6.5);
\draw[thick, dashed] (-1.95, -3.25) to [out=-120, in=85] (-2.35, -4.25) to [out=-95, in=90] (-2.35, -4.75) to [out=-90, in=120](-2.05, -6.5);
\draw[ultra thick] (-2.55,-4.9) to [out=-15, in=-165] (-2.35,-4.9);

\draw[thick] (-1.05,-7.05) to (-0.35, -7.05);
\draw[thick] (-0.15, -7.05) to (1.75, -7.05) to (2.2, -6);
\draw[dotted] (2.2, -6) to (2.6,-5.2) to (-0.25, -5.2) to (-0.6, -6);
\draw[thick] (-0.6, -6) to (-1.05, -7.05);
\draw[thick, dashed] (0.2, -3.25) to [out=-60, in=95] (0.45, -4.25) to [out=-85, in=90] (0.45, -4.5) to [out=-90, in=60](0.05, -6.5);
\draw[thick, dashed] (1.05, -3.25) to [out=-120, in=85] (0.65, -4.25) to [out=-95, in=90] (0.65, -4.75) to [out=-90, in=120](0.95, -6.5);
\draw[ultra thick] (0.45,-4.9) to [out=-15, in=-165] (0.65,-4.9);

\end{tikzpicture}
\caption{Brill-Lindquist metrics.}\label{fig1}
\end{figure}

Relative to the asymptotic end where $|x|\to \infty$ one computes
$$m_{\ADM}(\infty)=\sum a_i.$$
However, relative to the asymptotic end where $x\to p_i$ one computes  
\begin{equation}\label{baremass-BL}
m_{\ADM}(p_i)=a_i\left(1+\frac{G}{2c^2}\sum_{j\neq i} \frac{a_j}{|p_i-p_j|}\right).
\end{equation}
The reader is encouraged to examine parallels between \eqref{CheatingEqn} and the expression \eqref{baremass-BL}. Once again we are lead to distinguishing the \emph{effective mass} $a_i$ from the \emph{bare mass} given by \eqref{baremass-BL}.

\subsection{On relativistic point-sources}\label{intro-point-sources}
As seen in Sections \ref{heuristics-subsection} and \ref{intro-BL}, the non-linearity of RPP is tied to the concept of interaction energy between different parts of a matter distribution. In contrast to Newtonian (linear) theory of gravity, the discrepancy between the concepts of the effective mass and the bare mass makes it at least difficult if not entirely impossible to view relativistic mass as an integral of mass density\footnote{Defining a suitable notion of quasi-local mass is still one of the most investigated problems in general relativity.}. This fact alone is an indicator that the Dirac delta framework -- the most commonly used framework for managing point-sources -- is inadequate in the relativistic context. The observation we just made has been in the literature at least since the landmark 1960-62 papers of Arnowitt, Deser and Misner \cite{ZeroMass, ADM}. Specifically, it can be shown that an employment of Dirac delta function in the absence of, say, charge necessitates the vanishing of the ADM mass of the point-source. We quote from \cite{ADM}:
\begin{quote}
.... mass only arises if a particle has nongravitational interaction ...
\end{quote}

The subject of point-sources was revisited more recently in \cite{NI} where -- in contrast to \cite{ZeroMass, ADM} -- it is shown that a gravitational point-source of non-zero mass which is not coupled to any other field (and is in particular electrically neutral) can indeed be constructed in a mathematically rigorous way. The results of \cite{NI} make it clear that we need to rethink the deeply engrained presumption that point-sources are to be modeled using the Dirac delta functions. To see what the appropriately non-linear substitute might be, let us investigate the RPP 
\begin{equation*}
\theta_{\Schw} \Delta_{\geucl} \theta_{\Schw} \dvolg= -4\pi \tfrac{G}{2c^2}\omega_{\Schw} \  \text{for} \ \theta_{\Schw} = (1 + \tfrac{G}{2c^2}\tfrac{m}{r})
\end{equation*}
for a Schwarzchild body $(1+\tfrac{G}{2c^2}\frac{m}{r})^4 \geucl$. We see that instead of (the multiple of) the Dirac delta function the bare mass $\omega_{\Schw}$ of a point-source schematically equals
\begin{equation}\label{meet-delta-over-r:eqn}
m\,\delta + m^2\,\frac{G}{2c^2}\,\frac{\delta}{r}= \omega_{\Schw}.
\end{equation}
Authors of \cite{NI} go on to make the idea of $\frac{\delta}{r}$ mathematically rigorous, and propose a point-source model based on such a $\frac{\delta}{r}$-framework. 

The approach to $\frac{\delta}{r}$ presented in \cite{NI} is reliant on blow up analysis. In a nutshell\footnote{The framework in \cite{NI} is more general and includes approximately self-similar ``collapse" to a point-source.}, the idea is to use a fixed matter distribution $\Omega$, a dilation $\H_{n}: y \mapsto x=y/n$ and a sequence of matter distributions $n\, (\H_{n})_{*}\Omega$. Note that because the Dirac delta distribution corresponds to $(\H_{n})_{*}\Omega$ it is the inclusion of the multiplicative factor of $n$ which makes $n\, (\H_{n})_{*}\Omega$ scale like $\frac{\delta}{r}$. Ultimately, the idea of \cite{NI} is to investigate the limit of geometries $\theta_n^4\geucl$, which are related to $n\, (\H_{n})_{*}\Omega$ by means of the RPP. 

A careful reader may have noticed that the description in the previous paragraph is at least somewhat flawed due to units. Metaphorically speaking, if the units on $\delta$ were to be $\mathrm{kg}$ then the units on $\frac{\delta}{r}$ would have to be $\frac{\mathrm{kg}}{\mathrm{m}}$. To address any and all concerns of this sort we now provide a precise formulation of the framework of \cite{NI}. (For a generalization adapted to different levels of gravitational interactivity see Section \ref{furthercontext:sec}.)

\begin{definition}\label{framework-basic}
Let $\Omega$ be a matter distribution supported on a compact subset of $\R^3$ and let $r_n$ be a sequence of positive numbers with $r_n\to 0$.
Consider
$\rn=\left(\tfrac{G}{2c^2}\cdot\tfrac{m}{r_n}\right)^{-1}$ where $m= \int\Omega$. A sequence of distributions $\omega_n$ is self-similar of $\Omega$-type if for the dilation $\mathcal{H}_{\rn}: y\mapsto \rn\,y$ we have 
$$\Omega=\rn\cdot \mathcal{H}_{\rn}^*\omega_n, \text{\ \ i.e.\ \ }
\omega_n = (\rn)^{-1}\cdot \left(\mathcal{H}_{\rn}\right)_*\Omega.$$
\end{definition}

As mentioned above, the idea of \cite{NI} is to investigate the limit of geometries 
$$g_n=\theta_{n}^4\geucl,$$ 
which are related to $\omega_{n}$ by means of the RPP. Ultimately, the main result of \cite{NI}, depicted in Figure \ref{fig2}, is that one indeed recovers Schwarzschild metric in the limit. 
\begin{figure}[h]
\centering
\begin{tikzpicture}[scale=.33]

\draw (-0.5,-6) to (6.25, -6) to (8.25,-4) to (1.5, -4) to (-0.5, -6);

\draw[thick, dashed] (2, -4.5) to [out=-30, in=100] (2.75, -6);
\draw[thick] (2.75, -6) to [out=-80, in=180] (3.75, -6.5) to [out=0, in=-100] (4.75, -6);
\draw[thick, dashed] (4.75, -6) to [out=80, in=-150] (5.75, -4.5);

\draw (8.75,-6) to (14, -6) to (16,-4) to (10.75, -4) to (8.75, -6);

\draw[thick, dashed] (11.5, -4.5) to [out=-30, in=100] (12.15, -5.25) to [out=-100, in=30] (11.75, -6);
\draw[thick] (11.75, -6) to [out=-150, in=90] (10.75, -7) to [out=-90, in=180] (12.25, -7.75) to [out=0, in=-100] (14.25, -7) to [out=80, in=-45] (13.5, -6);
\draw[thick, dashed] (13.5, -6) to [out=135, in=-90] (13.25, -5.25) to [out=80, in=-150] (13.75, -4.5);

\draw[thin] (12.15, -5.15) to [out=-10, in=-160] (13.25, -5.15);

\draw (16.65,-6) to (22, -6) to (24,-4) to (18.65, -4) to (16.65, -6);

\draw[thick, dashed] (19, -4.5) to [out=-30, in=100] (19.75, -5.6) to [out=-100, in=30] (19.5, -6);
\draw[thick] (19.5, -6) to [out=-120, in=75] (17.25, -7.25) to [out=-90, in=180] (19.5, -8.65) to [out=0, in=-100] (23, -7.5) to [out=100, in=-75] (20.8, -6);
\draw[thick, dashed] (20.8, -6) to [out=105, in=-90] (20.75, -5.6) to [out=80, in=-150] (21.5, -4.37);

\draw[thin] (19.75, -5.55) to [out=0, in=-165] (20.8, -5.55);

\draw (23.75, -6.5) to (26, -6.5);
\draw (25.8, -6.7) to (26, -6.5) to (25.8, -6.3);

\draw (26.75,-6) to (31.5, -6) to (33.5,-4) to (28.75, -4) to (26.75, -6);

\draw[thick, dashed] (29, -4.5) to [out=-30, in=90] (29.75, -6);
\draw[thick] (29.75, -6) to [out=-90, in=25] (28.75, -8);
\draw[thick] (31.75, -8) to [out=160, in=-90] (30.6, -6);
\draw[thick, dashed] (30.6, -6) to [out=90, in=-150] (31.45, -4.37);

\draw[thick] (29.8, -6.15) to [out=-20, in=-160] (30.6, -6.15);

\draw (26.5,-9) to (31.5, -9) to (33.5,-7) to (28.5, -7) to (26.5, -9);

\end{tikzpicture}
\caption{Pictorial description of results from \cite{NI}.}\label{fig2}
\end{figure}
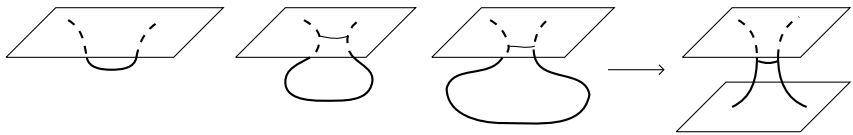

\begin{theorem}\label{NI-oldthm}
Adopt the notation established above, and fix some $k\in \mathbb{N}$. There exists an exhaustion
$$J_1\subseteq J_2\subseteq J_3\subseteq ... \subseteq \mathbb{R}^3\smallsetminus\{0\},\ \ \bigcup_{n=1}^\infty J_n=\mathbb{R}^3\smallsetminus\{0\}$$
with precompact open sets and embeddings $j_n:J_n\to \mathbb{R}^3$ such that
$$\|j_n^*g_{n}-g_{\mathrm{Schw}}\|_{C^k(J_n, g_{\mathrm{Schw}})}\to 0  \text{\ \ as\ \ } n\to \infty,$$
where $g_{\mathrm{Schw}}=\left(1+\tfrac{G}{2c^2}\tfrac{\mE}{|x|}\right)^4\geucl$ for some positive parameter $\mE$.
\end{theorem}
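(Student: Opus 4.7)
The plan is to reduce the blow-up problem to a stability statement for the RPP in its asymptotic boundary value, and then to read off the limit directly from Green's representation formula applied in the original coordinates.

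First, define the rescaled unknown $\hat{\theta}_n(y):=d_n\,\theta_n(d_n y)$. A direct calculation based on $\Delta_y=d_n^2\Delta_x$ and the identity $\phi_n(d_n y)=d_n^{-4}\Phi(y)$, where $\Phi$ denotes the density of $\Omega$ and $\omega_n=d_n^{-1}(\mathcal{H}_{d_n})_{*}\Omega$, shows that $\hat{\theta}_n$ itself solves the RPP for the \emph{fixed} distribution $\Omega$ with asymptotic value $b_n=d_n\to 0$:
$$\hat{\theta}_n\,\Delta_{\geucl}\hat{\theta}_n=-4\pi\tfrac{G}{2c^2}\Omega,\qquad \lim_{|y|\to\infty}\hat{\theta}_n(y)=d_n.$$
By Theorem \ref{ThetaF:prop} with $b=0$ there is a unique smooth positive $\hat{\theta}$ solving the $\Omega$-RPP with $\hat{\theta}\to 0$ at infinity. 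Establishing that $\hat{\theta}_n\to\hat{\theta}$ in $C^{k}_{\mathrm{loc}}(\R^3)$ is the crux of the argument: a maximum-principle comparison for the nonlinear difference $\hat{\theta}_n-\hat{\theta}$, together with a uniform-in-$n$ positive lower bound on $\hat{\theta}_n$ over $\mathrm{supp}(\Phi)$ so that the coupling $\Omega/\hat{\theta}_n$ stays bounded, should yield uniform convergence on $\mathrm{supp}(\Phi)$; interior elliptic estimates then promote this to $C^{k}_{\mathrm{loc}}$ convergence on all of $\R^3$.

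Next, I would exploit Green's representation for $\theta_n$ in the original coordinates. Since $\theta_n\to 1$ at infinity and the source $\phi_n$ is compactly supported,
$$\theta_n(x)=1+\frac{G}{2c^2}\int\frac{\phi_n(\xi)/\theta_n(\xi)}{|x-\xi|}\,\dvol_{\geucl}(\xi).$$
Changing variables $\xi=d_n\eta$ and using $\phi_n(d_n\eta)=d_n^{-4}\Phi(\eta)$ together with $\theta_n(d_n\eta)=d_n^{-1}\hat{\theta}_n(\eta)$, the integral collapses to
$$\theta_n(x)=1+\frac{G}{2c^2}\int\frac{\Phi(\eta)/\hat{\theta}_n(\eta)}{|x-d_n\eta|}\,d\eta.$$
For $x$ in a compact $K\subset\R^3\setminus\{0\}$, we have $|x-d_n\eta|\to|x|$ uniformly on $\mathrm{supp}(\Phi)$, so the uniform convergence $\hat{\theta}_n\to\hat{\theta}$ on $\mathrm{supp}(\Phi)$ yields
$$\theta_n(x)\ \longrightarrow\ 1+\frac{G\,\mE}{2c^2\,|x|},\qquad \mE:=\int\Omega/\hat{\theta}>0,$$
uniformly on $K$. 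Since $\mathrm{supp}(\phi_n)$ shrinks to $\{0\}$, for large $n$ the function $\theta_n$ is harmonic on $K$; interior estimates for harmonic functions upgrade the $C^0$ convergence to $C^k$. Taking the exhaustion $J_n=\{1/n<|x|<n\}$ of $\R^3\setminus\{0\}$ by precompact open sets with $j_n=\mathrm{id}$, the desired conclusion $\|j_n^{*}g_n-g_{\mathrm{Schw}}\|_{C^k(J_n,g_{\mathrm{Schw}})}\to 0$ follows with $g_{\mathrm{Schw}}=(1+G\mE/(2c^2|x|))^4\geucl$.

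The main technical obstacle is the $C^{k}_{\mathrm{loc}}$ convergence $\hat{\theta}_n\to\hat{\theta}$ as $b_n=d_n\to 0$. The difficulty is that the limit $\hat{\theta}$ has $b=0$ and decays at infinity, so a priori the sequence $\hat{\theta}_n$ could degenerate in delicate ways; one needs a quantitative strengthening of the uniqueness and regularity arguments behind Theorem \ref{ThetaF:prop} rather than merely its existence statement. Once this stability in $b$ is in hand, the remainder of the proof is an elementary change of variables in Green's formula combined with standard elliptic bootstrap.
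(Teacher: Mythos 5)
Your proposal is essentially the argument behind Theorem \ref{NI-oldthm}: note first that this theorem is imported from \cite{NI} rather than proved in the present paper, but both the proof in \cite{NI} and the paper's own proofs of the analogous superposition results follow exactly your architecture --- rescale so that the blown-up unknown solves the RPP for the \emph{fixed} source $\Omega$ with boundary value $b_n=d_n\to 0$, identify the limit with the unique $b=0$ solution $\Theta$ of Theorem \ref{ThetaF:prop}, and then read off $\mE=\int\Omega/\Theta$ from Green's representation after the change of variables $\xi=d_n\eta$. The one step you leave as a sketch, the locally uniform convergence $\hat{\theta}_n\to\hat{\theta}$ as $b_n\to 0$, is precisely what the paper supplies by two interchangeable routes: a compactness-plus-uniqueness argument (uniform $L^\infty$ bounds from $\Delta(\theta^2)\ge -8\pi\tfrac{G}{2c^2}\phi$, elliptic bootstrap, Rellich diagonal extraction, the decay statement of Lemma \ref{ThetaDBC:prop} to pin down the boundary value of the limit, and uniqueness at $b=0$), or the quantitative monotone comparison $0\le\theta_{s}-\theta_{b}\le s-b$ established in the proof of Theorem \ref{AngryIvaThm}, which makes the family uniformly Cauchy in $b$. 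Either closes your flagged gap with tools already present, so the proposal is correct and matches the intended proof.
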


In addition, the mass parameter $\mE$ is explicitly computed in \cite{NI}. To state the result regarding $\mE$ we need to revisit Theorem \ref{ThetaF:prop} for the specific value of $b=0$. 
Let $\Theta$ denote the unique, smooth, positive solution of 
\begin{equation}\label{thecrux}
\Theta \Delta_{\geucl} \Theta\, \dvol_{\geucl}= -4\pi \frac{G}{2c^2} \Omega,  \ \ \ \ \lim_{|y| \to \infty} \Theta(y) = 0.
\end{equation}
The work in \cite{NI} (see Section 1.5) shows that 
$$\mE=\int \frac{\Omega}{\Theta}.$$
In hindsight, it would have been prudent to employ the following normalization in \cite{NI}. 

\begin{remark}\label{normalization:remark}
Replacing $\Omega$ with $\Omega_C = C\,\Omega$ amounts to replacing $\Theta$ with $\Theta_C = \sqrt{C}\,\Theta$. If we choose $C$ so that $$C = \left(\int\frac{\Omega}{\Theta}\right)^{-2}$$
we have the following consequence:
$$ \int \frac{\Omega_C}{\Theta_C} = \sqrt{C}\int\frac{\Omega}{\Theta} = 1.$$ 
From now on we always assume such a choice of $C$. In other words, from now on we always assume a fixed matter distribution $\Omega_0$ such that for its corresponding $\Theta_0$ we have 
$$\int \frac{\Omega_0}{\Theta_0} =1.$$
\end{remark}

While the above normalization seems algebraically innocent, it does have a somewhat profound physical interpretation.  The narrative thus far has de-facto assumed that the units on matter distributions 
$\Omega$ (see Definition \ref{matterdist:defn}) are $\mathrm{kg}$ while the conformal factors $\Theta$ are unit-less. Under such a choice, the constant $C$ from Remark \ref{normalization:remark} has the units of $\mathrm{kg}^{-2}$. Overall, such a choice of units forces the units on $\Omega_0$ and $\Theta_0$ to be $\frac{1}{\mathrm{kg}}$. Should the reader find themselves uncomfortable with this perspective, they may benefit from the following informal description. Both $\Omega_0$ and $\Theta_0$ are in some sense ``bindable" to mass: $\Omega_0$,  much like $\frac{G}{2c^2}\cdot \frac{\delta}{r}$ in \eqref{meet-delta-over-r:eqn}, awaits $(\mathrm{mass})^2$ to be multiplied with. At the same time, $\Theta_0$ awaits just $\mathrm{mass}$. For clarity reasons we emphasize  that $\Theta=\Theta_0$ solves 
$$\Theta \Delta_{\geucl} \Theta\, \dvolg= -4\pi \frac{G}{2c^2} \Omega_0,  \ \ \ \ \lim_{|y| \to \infty} \Theta(y) = 0,$$
while $\Theta=m\Theta_0$ solves
\begin{equation}\label{m^2:eqn}
\Theta \Delta_{\geucl} \Theta\, \dvolg= -4\pi \frac{G}{2c^2} m^2\Omega_0,  \ \ \ \ \lim_{|y| \to \infty} \Theta(y) = 0.
\end{equation}
In hindsight, it would have been prudent to center the work in \cite{NI} around \eqref{m^2:eqn} rather than \eqref{thecrux}. In this paper we assume normalized $\Omega_0$. 

\subsection{On superposition of relativistic point-sources}

In the sense discussed in Sections \ref{heuristics-subsection}, \ref{intro-BL} and \ref{intro-point-sources} above, the non-linearity of general relativity makes it questionable if one should even attempt to view a relativistic cloud of matter as being made up of point-source constituents. And yet, there is an undeniable sense that mass, much like volume or simply amount of some substance, is an \underline{\emph{extensive}} variable. In this paper we point to the possibility that relativistic clouds of matter could in fact be viewed as cumulative effects of point-sources provided one interprets the concept of point-sources as in \cite{NI}.

Perhaps the most delicate issue to circumnavigate is the inherent lack of superposition associated with the RPP. Even if one were to believe that the framework of \cite{NI} leads to the appropriate interpretation of the phrase ``point-source", there is no reason to believe that the framework lends itself to a principle of superposition. It is natural to think that Brill-Lindquist metrics \eqref{BL-maineqn} might somehow embody the idea of superposition, but it is not at all obvious (and it is maybe not even expected) that such an intuitive understanding can be put on mathematically rigorous footing. The entire goal of our paper is to provide a framework which serves as a de-facto superposition of point-sources of \cite{NI} and which recovers Brill-Lindquist metrics \eqref{BL-maineqn} in the limit. Specifically, our framework relies on the following definition. 

\begin{definition}[Source configuration]\label{dust:defn}
Suppose $\Omega_0$ is a normalized matter distribution in the sense of Remark \ref{normalization:remark}, and suppose $\Omega_0$ is supported on the unit ball centered at the origin. Let $\P = \{ (p_1, a_1), ... , (p_{Q}, a_{Q})   \}$ be a set of $Q$ point-sources, each located at $p_i$ with effective mass $a_i$. Finally, consider the dilations 
$$\H_{n, p_i}: y \mapsto x = p_i + y/n.$$ 
We say $\omega_{\P,n}$ is a source configuration of $(\Omega_0, \P)$-type if
\begin{equation*}
\omega_{\P,n} =\sum_{i=1}^{Q} a_i^2\cdot n \, (\H_{n,p_i})_{*}\Omega_0.
\end{equation*}
By the separation parameter of the configuration we mean
$$\sigma (\P):=\min \{  |p_j - p_i | \big | i \neq j \}.$$
When $\P$ is clear from context we simply write $\omega_n$ and $\sigma$. 
\end{definition}

Much as in \cite{NI} we analyze solutions $\theta_{\P,n}$ to 
\begin{equation}\label{littleRPP:eqn}
\theta_{\P,n} \Delta_{\geucl} \theta_{\P,n}\dvolg = -4\pi \tfrac{G}{2c^2} \omega_{\P,n}, \  \lim_{|x| \to \infty} \theta_{\P,n}(x) = 1,
\end{equation}
where $\omega_{\P,n}$ is from Definition \ref{dust:defn}. The following is our main result. 

\begin{theorem}[Superposition Theorem]\label{Will-Iva-Thm1}
We have 
$$\theta_{\P,n}(x)\to 1+ \frac{G}{2c^2}\sum_{i=1}^{Q}  \frac{a_i}{|x -p_i|}\ \ \text{as}\ \ n\to \infty$$
over all compact subsets of $\R^3\smallsetminus \{p_1, ..., p_Q\}$. The convergence is uniform with all the derivatives.
\end{theorem}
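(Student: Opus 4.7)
My plan is to recast the problem through a Green's representation formula and reduce it to a weak concentration statement for an auxiliary measure, which I will establish by sandwiching $\theta_{\P,n}$ between the single-source solutions analyzed in \cite{NI}.

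Dividing \eqref{littleRPP:eqn} by $\theta_{\P,n}$ and invoking the asymptotics \eqref{asymptotic-conditions}, standard Newtonian potential theory yields
\[
\theta_{\P,n}(x) = 1 + \frac{G}{2c^2}\int_{\R^3}\frac{d\mu_n(\xi)}{|x-\xi|}, \qquad d\mu_n := \frac{\omega_{\P,n}}{\theta_{\P,n}}.
\]
It therefore suffices to prove the weak convergence $\mu_n \rightharpoonup \sum_{i=1}^{Q} a_i\,\delta_{p_i}$ of positive Radon measures, for then dominated convergence gives pointwise convergence of $\theta_{\P,n}$ to the asserted limit on $\R^3\smallsetminus\{p_1,\dots,p_Q\}$; elliptic regularity applied to $\Delta_{\geucl}\theta_{\P,n}$ (which vanishes on any compact $K\subset\R^3\smallsetminus\{p_i\}$ once $\mathrm{supp}(\omega_{\P,n})\subset\bigcup_i B(p_i,1/n)$ is disjoint from $K$) then upgrades this to uniform $C^k$ convergence on $K$.

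To establish the concentration, for each $i$ let $\theta_n^{(i)}$ denote the single-source RPP solution for $a_i^2\,n\,(\H_{n,p_i})_*\Omega_0$ with boundary value $1$, and set $\hat\theta_{i,n}(y):=\theta_{\P,n}(p_i+y/n)/(na_i)$. A direct chain-rule computation, using the pushforward identity $(\H_{n,p_i})_*\Omega_0 = n^3 \phi_0(n(x-p_i))\,dx$ in density form, shows that $\hat\theta_{i,n}$ satisfies the $n$-independent equation $\hat\theta_{i,n}\,\Delta_y \hat\theta_{i,n} = -4\pi\tfrac{G}{2c^2}\Omega_0$ on $B(0,n\sigma/2)$. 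A comparison principle for the RPP---obtained by subtracting two RPPs and applying the weak maximum principle to the resulting Schr\"odinger-type inequality $(-\Delta+c)(\theta_{\P,n}-\theta_n^{(i)})\geq 0$, with $c := -\Delta\theta_n^{(i)}/\theta_{\P,n}\geq 0$---yields $\theta_{\P,n}\geq \theta_n^{(i)}$; feeding this back into the Green's representation through $\phi_{i,n}/\theta_{\P,n}\leq \phi_{i,n}/\theta_n^{(i)}$ simultaneously produces the upper bound $\theta_{\P,n}\leq 1+\sum_j(\theta_n^{(j)}-1)$. With the normalization of Remark~\ref{normalization:remark}, Theorem~\ref{NI-oldthm} applied to each $\theta_n^{(i)}$ (whose effective mass is $a_i$, per \eqref{m^2:eqn}) gives both $\theta_n^{(i)}(x)\to 1+\tfrac{G}{2c^2}a_i/|x-p_i|$ on compact subsets of $\R^3\smallsetminus\{p_i\}$ and $\theta_n^{(i)}(p_i+y/n)/(na_i)\to \Theta_0(y)$ on compact subsets of $\R^3\smallsetminus\{0\}$; the sandwich then forces $\hat\theta_{i,n}\to\Theta_0$ on compact subsets of $\R^3\smallsetminus\{0\}$. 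Finally, the change of variables $y=n(x-p_i)$ gives
\[
\int f\,d\mu_n = \sum_{i=1}^{Q} a_i\int f\!\left(p_i + \tfrac{y}{n}\right)\frac{\Omega_0(y)}{\hat\theta_{i,n}(y)},
\]
and dominated convergence (with the uniform lower bound on $\hat\theta_{i,n}$ over $\mathrm{supp}(\Omega_0)$ provided by the sandwich) combined with the normalization $\int \Omega_0/\Theta_0 = 1$ yields $\int f\,d\mu_n\to \sum_i a_i f(p_i)$ for every $f\in C_c(\R^3)$.

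The step I expect to be the main obstacle is the global comparison $\theta_{\P,n}\geq \theta_n^{(i)}$ on all of $\R^3$: although the operator $-\Delta + c$ is non-negative, the weak maximum principle on an unbounded domain must be applied through exhaustion by large balls, and some care is needed where $c$ degenerates outside $\mathrm{supp}(\phi_n^{(i)})$. A secondary technical point is to extract $C^k_{\text{loc}}$ convergence of the single-source conformal factors $\theta_n^{(i)}$ themselves---as opposed to merely of the metrics $(\theta_n^{(i)})^4\geucl$---both in the ``far'' regime $x\in K\subset\R^3\smallsetminus\{p_i\}$ and in the ``blown-up'' regime $y\mapsto \theta_n^{(i)}(p_i+y/n)/(na_i)$; this should follow by inspection of the proof of Theorem~\ref{NI-oldthm} in \cite{NI}, since the diffeomorphisms appearing there are essentially the rescalings we require.
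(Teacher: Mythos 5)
Your proposal starts from the same Green's representation formula as the paper, but then diverges in how it handles the key nonlinear ingredient. The paper blows up the \emph{multi-source} solution directly: it sets $\Theta_{n,p_j}=\tfrac1n\H_{n,p_j}^*\theta_{\P,n}$, derives uniform two-sided bounds from scratch (the inequality $\Delta_{\geucl}(\Theta_{n,p_j}^2)\ge -4\pi\tfrac{G}{c^2}\Phi_{n,p_j}$ fed into Green's formula for the upper bound, Green's formula again for the lower bound), runs elliptic regularity, Rellich, and a diagonal argument, and identifies the limit as the unique solution $a_j\Theta_0$ of the $b=0$ problem \eqref{m^2:eqn}. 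You instead reduce the multi-source problem to the single-source results of \cite{NI} via a comparison principle: the sandwich $\theta_n^{(i)}\le\theta_{\P,n}\le 1+\sum_j(\theta_n^{(j)}-1)$ is correct (your worry about $c$ degenerating outside the support is unfounded, since $c\ge 0$ is all the maximum principle needs, and the exhaustion-by-balls argument with the decay \eqref{asymptotic-conditions} closes it), and it does force $\hat\theta_{i,n}\to\Theta_0$ wherever the single-source blow-ups converge, because the cross terms $\theta_n^{(j)}(p_i+y/n)-1$, $j\ne i$, stay bounded and are killed by the factor $1/(na_i)$. Your route buys a cleaner logical structure (multi-source superposition as a corollary of the single-source theory plus monotonicity) at the price of needing stronger citations; the paper's route is self-contained.

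The one genuine soft spot is exactly the step you flag as ``secondary'': Theorem \ref{NI-oldthm} as stated is a $C^k$ statement about the \emph{metrics} on an exhaustion of $\R^3\smallsetminus\{0\}$, whereas your dominated-convergence computation of $\int f\,d\mu_n$ integrates $\Omega_0/\hat\theta_{i,n}$ over all of $B(0,1)$, which contains the blow-up origin. You therefore need (i) pointwise convergence of $\theta_n^{(i)}(p_i+y/n)/(na_i)$ to $\Theta_0$ at least a.e.\ on $B(0,1)$ and (ii) a uniform positive lower bound for it on $B(0,1)$; neither is literally contained in Theorem \ref{NI-oldthm}, and the lower bound at the origin is precisely the nontrivial content of Lemma \ref{ThetaBounds:prop}(2) in the paper (which in turn rests on the $L^\infty$ upper bound obtained from the $\Delta(\Theta^2)$ trick). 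These facts are indeed proved in \cite{NI} for the single-source case, so the argument can be completed, but as written the proposal defers to ``inspection of the proof'' the very estimates that constitute the bulk of the paper's own argument; you should either quote the relevant single-source lemmas of \cite{NI} explicitly or reprove the uniform lower bound on $\mathrm{supp}(\Omega_0)$ directly from the representation formula, as the paper does.
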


The limit function in Theorem \ref{Will-Iva-Thm1} takes the form of the conformal factor of the Brill-Lindquist metrics for the cloud of point-sources of effective masses $a_i$ located at $p_i$ (see Section \ref{intro-BL}). This fact motivated the employment of the term ``effective mass" in Definition \ref{dust:defn}.

\subsection{One-parameter family of non-linear superposition principles}\label{furthercontext:sec}
We would also like to present the case that our superposition framework, made precise in Definition \ref{dust:defn} and Theorem \ref{Will-Iva-Thm1} above, is in fact a continuous extension of the classical superposition principle of Newtonian gravity. Inspired by the observation that the coupling of $\theta$ and $\Delta_{\geucl}\theta$ in \eqref{TheEqn} is modeling gravitational interaction (see Section \ref{heuristics-subsection}) we introduce the following one-parameter family of \emph{Generalized Poisson Problems}.

\begin{definition}[GPP]\label{GPP:defn}
Let $\alpha\in[0,1]$. By a Generalized Poisson Problem (GPP) we mean 
$$\theta^\alpha \Delta_{\geucl}\theta \dvol_{\geucl}=-4\pi \tfrac{G}{2c^2} \omega,\ \ \lim_{x\to \infty} \theta(x)=b$$
with $b\ge 0$. When $b$ is not explicitly mentioned the reader should assume the value of $b=1$.
\end{definition}

The idea here is that the coupling of $\theta^\alpha$ to $\Delta_{\geucl}\theta$ models gravitational interactivity. The continuous parameter $\alpha$ marks the departure from the Newtonian Poisson equation ($\alpha=0$) towards its relativistic counterpart ($\alpha=1$). Basically, the introduction of the parameter $\alpha$ allows us to \emph{continuously} transition from the non-interactivity of the Newtonian matter towards the full interactivity of the relativistic matter. The following is the main theorem addressing the existence and the uniqueness of the solutions of the GPP, and is proven in Section \ref{GPP:sec}. The theorem also addresses the continuity of the GPP framework with respect to the parameter $\alpha$.  

\begin{theorem}\label{AngryIvaThm}
For each $b\in[0,1]$ and each $\alpha\in[0,1]$ there exists a unique solution $\theta_{b,\alpha}$ of 
\begin{equation}\label{TheAlphaEqn-modified}
\theta_{b,\alpha}^\alpha \Delta_{\geucl}\theta_{b,\alpha} \dvol_{\geucl}=-4\pi \tfrac{G}{2c^2} \omega,\ \ \lim_{x\to \infty} \theta_{b,\alpha}(x)=b,
\end{equation}
which in addition satisfies the asymptotic conditions \eqref{asymptotic-conditions}. The family $\theta_{b,\alpha}$ depends continuously on $(b,\alpha)$ in the sense that for all convergent sequences $(b_n,\alpha_n)\to (b,\alpha)$ in the permissible range we have convergences $\theta_{b_n,\alpha_n}\to \theta_{b,\alpha}$ with all derivatives on all compact subsets of $\R^3$.
\end{theorem}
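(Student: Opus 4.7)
The plan is to convert the GPP into a fixed-point equation, settle existence, uniqueness, and decay for each admissible pair $(b,\alpha)$, and then derive the continuity in $(b,\alpha)$ by a compactness-plus-uniqueness argument.

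\emph{Fixed point, existence, uniqueness, decay.} Writing $\omega=\phi\,\dvolg$, any positive $\theta$ solving \eqref{TheAlphaEqn-modified} is superharmonic, so the asymptotic condition forces $\theta\ge b$ on $\R^3$. Dividing by $\theta^\alpha$ and inverting $\Delta_{\geucl}$ against the Newtonian kernel recasts the GPP as the fixed-point equation
\[
\theta(x)\;=\;T_{b,\alpha}[\theta](x)\;:=\;b+\frac{G}{2c^2}\int_{\R^3}\frac{\phi(\xi)}{|x-\xi|\,\theta(\xi)^\alpha}\,\dvolg(\xi).
\]
For $\alpha=0$ the map $T_{b,0}$ is $\theta$-independent and yields $\theta_{b,0}=b+\tfrac{G}{2c^2}\int\phi(\xi)/|x-\xi|\,\dvolg(\xi)$; the case $\alpha=1$ is Theorem \ref{ThetaF:prop}. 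For $\alpha\in(0,1)$ with $b>0$, a sufficiently large constant $M$ is a super-solution and $b$ itself a sub-solution, and the compact operator $T_{b,\alpha}$ maps $\{\psi:b\le\psi\le M\}$ into itself, so a smooth positive solution exists by Schauder's fixed-point theorem. Uniqueness follows from the maximum principle: for two solutions $\theta_1,\theta_2$ the difference $w=\theta_1-\theta_2$ satisfies $\Delta w = V\,w$ with
\[
V\;=\;4\pi\alpha\,\frac{G}{2c^2}\,\phi\int_0^1\bigl(t\theta_1+(1-t)\theta_2\bigr)^{-\alpha-1}dt\;\ge\;0,
\]
is harmonic off $\mathrm{supp}(\phi)$, and vanishes at infinity; the strong maximum principle for the elliptic operator $\Delta - V$ then forces $w\equiv 0$. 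The asymptotic estimates \eqref{asymptotic-conditions} read off the integral representation after differentiating under the integral sign and using the compact support of $\phi$ together with the uniform upper and lower bounds on $\theta$ over $\mathrm{supp}(\phi)$.

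\emph{Continuity in $(b,\alpha)$.} Given $(b_n,\alpha_n)\to(b,\alpha)$ in $[0,1]^2$, the preceding step provides a uniform upper bound $M$ and, on each compact subset of $\R^3$, a uniform positive lower bound for $\theta_n:=\theta_{b_n,\alpha_n}$. Consequently the right-hand sides $\phi\,\theta_n^{-\alpha_n}$ of $\Delta\theta_n=-4\pi(G/2c^2)\phi\,\theta_n^{-\alpha_n}$ are uniformly bounded in every $C^k$, so elliptic regularity combined with Arzela--Ascoli delivers a $C^k_{\mathrm{loc}}$-convergent subsequence whose limit $\theta_\infty$ solves the GPP for $(b,\alpha)$ and satisfies \eqref{asymptotic-conditions}. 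By the uniqueness already proved, $\theta_\infty=\theta_{b,\alpha}$, and since every subsequence has the same limit the whole family converges.

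\emph{Main obstacle.} The delicate regime is $b=0$, where the natural sub-solution $\theta\equiv b$ degenerates and positivity of $\theta_{0,\alpha}$ on $\mathrm{supp}(\phi)$ is not automatic. Following the $\alpha=1$ analysis in \cite{NI}, I would realize $\theta_{0,\alpha}$ as the decreasing limit of the $b>0$ solutions and secure its positivity by comparison with a positive multiple of the Newton potential of $\phi$. The continuity step near $b=0$ likewise requires a uniform-in-$n$ version of the decay bound $|\theta_n(x)-b_n|\lesssim|x|^{-1}$ so that the limit truly preserves $\theta_\infty(x)\to 0$ at infinity instead of acquiring a spurious positive asymptote.
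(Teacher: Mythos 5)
Your overall architecture is sound and diverges from the paper's in two places worth noting. For existence at $b>0$ you invoke Schauder's fixed point theorem on the order interval $\{b\le\psi\le M\}$, whereas the paper runs the monotone iteration $\theta_{m+1}=1+\frac{G}{2c^2}\int\frac{\omega(\xi)}{|x-\xi|\,\theta_m^\alpha(\xi)}$, extracts the limits $\theta_-\le\theta_+$ of the even and odd subsequences, and forces $\theta_-=\theta_+$ by a Jensen's-inequality/strong-maximum-principle argument applied to $k\theta_-+(1-k)\theta_+$; both routes are legitimate, and your uniqueness argument (Mean Value Theorem linearization, $\Delta w=Vw$ with $V\ge 0$, maximum principle) is essentially the paper's. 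For joint continuity you use soft compactness plus uniqueness of the limit, while the paper proves quantitative comparison estimates --- $0\le\theta_{s,\alpha}-\theta_{b,\alpha}\le s-b$ uniformly in $\alpha$ (its \eqref{cty-a}) together with monotonicity in $\alpha$ (its \eqref{monotone} and \eqref{newmonotonicity}) --- and assembles joint continuity from separate continuity plus uniformity. The paper's route simultaneously delivers the $b\to 0$ limit, hence existence at $b=0$; yours is shorter where it applies.

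The one genuine gap is the uniform upper bound on $\theta_{b_n,\alpha_n}$ that both your Arzel\`a--Ascoli step and your positivity-on-$\mathrm{supp}(\phi)$ step require. The only super-solution your construction produces is $M(b,\alpha)=b+Cb^{-\alpha}$, which blows up as $b\to 0$; so along a sequence $b_n\to 0$ you have no a priori bound with which to start the compactness argument, nor to run the comparison $\theta\ge\frac{G}{2c^2}M^{-\alpha}\int\phi(\xi)|x-\xi|^{-1}\dvolg$ that is supposed to secure positivity. You correctly flag $b=0$ as the obstacle, but the proposed fix (realizing $\theta_{0,\alpha}$ as the decreasing limit of the $b>0$ solutions) presupposes exactly the missing bound. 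Two standard repairs: (i) prove monotonicity in $b$ by the same maximum-principle linearization you already use for uniqueness, giving $\theta_{b,\alpha}\le\theta_{1,\alpha}\le\theta_{1,0}=1+\frac{G}{2c^2}\int\phi(\xi)|x-\xi|^{-1}\dvolg$ --- this is precisely the paper's chain \eqref{cty-a}, \eqref{monotone}; or (ii) observe that $\Delta_{\geucl}(\theta^{1+\alpha})\ge(1+\alpha)\theta^\alpha\Delta_{\geucl}\theta=-4\pi(1+\alpha)\tfrac{G}{2c^2}\phi$ and apply the representation formula \eqref{GRF-main} to $\theta^{1+\alpha}$, yielding $\theta^{1+\alpha}\le b^{1+\alpha}+(1+\alpha)\tfrac{G}{2c^2}\int\phi(\xi)|x-\xi|^{-1}\dvolg$, a bound uniform over $(b,\alpha)\in[0,1]^2$ (the device used in Lemma \ref{ThetaBounds:propa}). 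With either bound in hand, the positivity lower bound on $\mathrm{supp}(\phi)$, the uniform $|x|^{-1}$ decay you rightly identify as necessary at $b=0$, and the rest of your continuity argument all go through.
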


Of particular interest for our paper is the solution $\theta_{0,\alpha}$ of the GPP corresponding to the value of $b=0$. The following definition goes in parallel with Remark  \ref{normalization:remark}. 
\begin{definition}[$\alpha-$normalization]\label{newnormalization:definition}
By an $\alpha$-normalized matter distribution we mean a matter distribution $\Omega_{0,\alpha}$ supported on the standard unit ball such that 
$$\int \frac{\Omega_{0,\alpha}}{\Theta_{0,\alpha}^\alpha}=1.$$
Here $\Theta_{0,\alpha}$ denotes the solution of the GPP
$$\Theta^\alpha \Delta_{\geucl} \Theta \dvolg =-4\pi \frac{G}{2c^2} \Omega_{0,\alpha},\ \ \lim_{|y|\to \infty} \Theta(y)=0$$
whose existence and uniqueness is established in Theorem \ref{AngryIvaThm}. When a value of $\alpha$ is clear from context\footnote{E.g. the value of $\alpha=1$ is used throughout the sections of our paper dealing with RPP.} we drop $\alpha$ from the notation and simply write $\Omega_0$ and $\Theta_0$.
\end{definition}
The existence of $\alpha$-normalized distributions, as well as the fact that focusing only on them does not reduce any generality, can be established much as in Remark \ref{normalization:remark}; the appropriate value of $C$ in the generalized framework is 
$$C=\left(\int \frac{\Omega}{\Theta^\alpha}\right)^{-(1+\alpha)}.$$
It is interesting to observe that, for a fixed distribution $\Omega$ and by virtue of Theorem \ref{AngryIvaThm}, the normalization constant $C$ varies continuously in $\alpha$. Thus, there is a sense in which $\alpha$-normalization is continuous in $\alpha$.
We are now in position to alter Definition \ref{dust:defn} and make it compatible with the framework of GPP. 

\begin{definition}\label{sourceconfiga}
Let $\P$ and $\H_{n, p_i}$ be as in Definition \ref{dust:defn} and let $\Omega_{0,\alpha}$ be an $\alpha$-normalized matter distribution in the sense of Definition \ref{newnormalization:definition}. We say $\omega_{\P,\alpha,n}$ is a source configuration of $(\alpha, \Omega_{0,\alpha}, \P)$-type if 
\begin{equation}\label{alpha-matter:defn}
\omega_{\P,\alpha,n} =\sum_{i=1}^{Q} a_i^{\alpha+1}\cdot n^{\alpha} \, (\H_{n,p_i})_{*}\Omega_{0,\alpha}.
\end{equation}
When $\alpha$ and $\P$ are clear from context we simply write $\omega_n$.
\end{definition}

Note that in the situation when $\alpha=0$ the configuration \eqref{alpha-matter:defn} reduces to $\sum_{i=1}^{Q} a_i\cdot (\H_{n,p_i})_{*}\Omega_{0}$. The latter is akin to a discrete version of the decomposition 
$$\varrho(x)=\int_{p\in \R^3} \varrho\big{|}_p\cdot \delta_p(x),$$
where $\delta_p$ denotes the Dirac delta distribution centered at $p$. On the other hand, the value of $\alpha=1$ corresponds to the relativistic matter distribution of Definition \ref{dust:defn}. As discussed above, \eqref{alpha-matter:defn} exhibits continuity in $\alpha$ and can thus be interpreted as providing a continuous transition from the classical Newtonian framework to the relativistic framework of the RPP. 

For the value of $\alpha=0$ the following result is recognizable as the classical Superposition Principle. By permitting $\alpha$ to vary we obtain a one-parameter family of superposition principles. 

\begin{theorem}[Generalized Superposition Theorem]\label{Will-Iva-alphaThm}
Consider the solutions $\theta_{\P,\alpha, n}$ to 
\begin{equation}\label{littleGPP:eqn}
\begin{cases}
\theta^{\alpha}_{\P,\alpha,n} \Delta_{\geucl} \theta_{\P,\alpha,n} \dvolg= -4\pi \tfrac{G}{2c^2} \omega_{\P,\alpha,n},\\
\lim_{|x| \to \infty} \theta_{\P,\alpha,n}(x) = 1.
\end{cases}
\end{equation}
where $\omega_{\P,\alpha,n}$ is from Definition \ref{sourceconfiga}. We have 
$$\theta_{\P,\alpha, n}(x)\to 1+ \frac{G}{2c^2}\sum_{i=1}^{Q}  \frac{a_i}{|x -p_i|}\ \ \text{as}\ \  n\to \infty$$
over all compact subsets of $\R^3\smallsetminus \{p_1, ..., p_Q\}$. The convergence is uniform with all the derivatives.
\end{theorem}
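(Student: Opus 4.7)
The plan is to sandwich $\theta_{\P,\alpha,n}$ between explicit sub- and super-solutions of \eqref{littleGPP:eqn} built from the normalized profile $\Theta_{0,\alpha}$, force the blow-up near each $p_i$ to converge to $a_i\Theta_{0,\alpha}$, and then extract the macroscopic limit via Green's representation. The key candidates are
$$\underline{\theta}_{n}^{(i)}(x) := a_i\, n\, \Theta_{0,\alpha}\!\bigl(n(x-p_i)\bigr),\qquad \bar{\theta}_n(x) := 1 + \sum_{j=1}^{Q} a_j\, n\, \Theta_{0,\alpha}\!\bigl(n(x-p_j)\bigr),$$
motivated by the desired Brill-Lindquist asymptotics.

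Using the scaling $\Delta_x[f(n(x-p_j))] = n^2(\Delta f)(n(x-p_j))$ together with the equation $\Theta_{0,\alpha}^\alpha\Delta\Theta_{0,\alpha}\,\dvolg = -4\pi \tfrac{G}{2c^2}\Omega_{0,\alpha}$, I would verify that each $\underline{\theta}_n^{(i)}$ is a sub-solution and $\bar{\theta}_n$ is a super-solution; both checks reduce, on the support of the $i$-th source, to the elementary inequality $\bar{\theta}_n(x) \geq a_i n\Theta_{0,\alpha}(n(x-p_i))$. The comparison principle for the GPP follows by noting that on $\{\theta_1 > \theta_2\}$ (with $\theta_1$ sub-, $\theta_2$ super-solution) the difference $\theta_1-\theta_2$ is sub-harmonic, since $1/\theta_1^\alpha \leq 1/\theta_2^\alpha$ when $\theta_1 \geq \theta_2 > 0$, and vanishes on the boundary; together with the matching asymptotic data at infinity ($\underline{\theta}_n^{(i)}(\infty) = 0$, $\bar{\theta}_n(\infty) = 1 = \theta_{\P,\alpha,n}(\infty)$), this yields $\underline{\theta}_n^{(i)}(x) \leq \theta_{\P,\alpha,n}(x) \leq \bar{\theta}_n(x)$ pointwise on $\R^3$.

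Rescaling via $\hat{\theta}_{i,n}(y) := \theta_{\P,\alpha,n}(p_i + y/n)/n$, a chain-rule computation shows that $\hat{\theta}_{i,n}$ satisfies $\hat{\theta}_{i,n}^\alpha \Delta_y \hat{\theta}_{i,n}\,\dvolg = -4\pi \tfrac{G}{2c^2}a_i^{\alpha+1}\Omega_{0,\alpha}$ on an arbitrary ball $\{|y|<R\}$ once $n$ is large in terms of $R$ and $\sigma(\P)$, and the sandwich bounds rescale to
$$a_i\,\Theta_{0,\alpha}(y)\ \leq\ \hat{\theta}_{i,n}(y)\ \leq\ a_i\,\Theta_{0,\alpha}(y) + O(1/n)$$
uniformly on compact subsets of $\R^3$. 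The $O(1/n)$ upper-error comes from the cross-terms $a_j\Theta_{0,\alpha}(y + n(p_i-p_j))$ for $j\neq i$, which decay at rate $1/n$ because $\Theta_{0,\alpha}(z)\sim \tfrac{G}{2c^2}|z|^{-1}$ as $|z|\to\infty$ -- itself a consequence of Green's representation for $\Theta_{0,\alpha}$ and the $\alpha$-normalization from Definition \ref{newnormalization:definition}. In particular, $\hat{\theta}_{i,n}\to a_i\Theta_{0,\alpha}$ uniformly on compacts.

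The outer limit then follows from Green's representation for $\theta_{\P,\alpha,n} - 1$: since $-\Delta\theta_{\P,\alpha,n} = 4\pi \tfrac{G}{2c^2}\phi_n/\theta_{\P,\alpha,n}^\alpha$ is non-negative and compactly supported while $\theta_{\P,\alpha,n} - 1$ decays at infinity, the change of variables $\xi = p_i + y/n$ gives
$$\theta_{\P,\alpha,n}(x) - 1 = \frac{G}{2c^2}\sum_{i=1}^{Q} a_i^{\alpha+1}\int_{B(0,1)} \frac{\Phi_{0,\alpha}(y)\,\hat{\theta}_{i,n}(y)^{-\alpha}}{|x - p_i - y/n|}\,dy,$$
where $\Omega_{0,\alpha} = \Phi_{0,\alpha}\,\dvolg$. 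For $x$ in a compact set $K\subset\R^3\setminus\{p_1,\ldots,p_Q\}$, the uniform lower bound $\hat{\theta}_{i,n} \geq a_i\Theta_{0,\alpha} > 0$ on $\mathrm{supp}\,\Phi_{0,\alpha}$ together with the uniform convergence $\hat{\theta}_{i,n}\to a_i\Theta_{0,\alpha}$ licenses passage to the limit under the integral; the normalization $\int\Omega_{0,\alpha}/\Theta_{0,\alpha}^\alpha = 1$ then collapses each resulting integral to $a_i^{-\alpha}/|x-p_i|$, producing the claimed $1 + \tfrac{G}{2c^2}\sum_i a_i/|x-p_i|$. Uniform convergence in all derivatives follows from standard interior estimates applied to $\theta_{\P,\alpha,n}$, which is harmonic on a neighborhood of $K$ once $n$ is large enough. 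The main obstacle is the quantitative two-sided blow-up matching: the upper bound relies crucially on the sharp asymptotic $\Theta_{0,\alpha}(z) \sim \tfrac{G}{2c^2}|z|^{-1}$ to make the cross-terms decay at rate $1/n$, and on the fortuitous fact that the constant ``$+1$'' in $\bar{\theta}_n$ already matches the boundary value $1$ of $\theta_{\P,\alpha,n}$ at infinity, so that no spurious constant survives the comparison.
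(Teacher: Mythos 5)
Your proposal is correct, but it takes a genuinely different route from the paper. The paper's proof is a compactness argument: it pulls the equation back under $\H_{n,p_j}$, derives uniform $L^\infty$ bounds on the rescaled solutions by applying Green's representation to $\Theta_{n,\alpha,p_j}^{\alpha+1}$, bootstraps interior elliptic estimates, extracts a convergent diagonal subsequence via the Rellich Lemma, identifies the limit as $a_j\Theta_{0,\alpha}$ through the uniqueness statement of Theorem \ref{AngryIvaThm} (the $b=0$ case), upgrades to full convergence by contradiction, and only then passes to the outer region exactly as in your last step, via the representation formula and the $\alpha$-normalization. You replace the entire compactness-plus-uniqueness machinery by explicit barriers: $a_i n\Theta_{0,\alpha}(n(x-p_i))$ solves the GPP whose source is precisely the $i$-th summand of $\omega_{\P,\alpha,n}$, hence is a sub-solution once the remaining (nonnegative) sources are discarded -- note that this, rather than the inequality you cite, is what makes the sub-solution check work -- while the super-solution check for $1+\sum_j a_j n\Theta_{0,\alpha}(n(x-p_j))$ does reduce to the trivial pointwise inequality you identify. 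The comparison principle you invoke is sound and is the same Strong Maximum Principle mechanism the paper uses for uniqueness; since the set $\{\theta_1>\theta_2\}$ may be unbounded you should, as the paper does, use the asymptotic data at infinity to close the maximum principle there. What your approach buys is a quantitative rate: the cross-terms decay like $1/n$ because $\Theta_{0,\alpha}(z)=O(|z|^{-1})$ (part of the asymptotics \eqref{asymptotic-conditions} guaranteed by Theorem \ref{AngryIvaThm}), so the rescaled solutions equal $a_i\Theta_{0,\alpha}+O(1/n)$ rather than merely converging. What the paper's argument buys is a template that does not require knowing sharp decay of the profile or producing explicit barriers, and hence transfers to settings where no exact self-similar solution of the localized problem is available. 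Your treatment of the outer limit and of the derivatives (harmonicity of $\theta_{\P,\alpha,n}$ near $K$ for large $n$) is correct and essentially matches the paper's final step.
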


For the value of $\alpha=1$ Theorem \ref{Will-Iva-alphaThm} reduces to Theorem \ref{Will-Iva-Thm1}. It is in this sense that we understand Theorem \ref{Will-Iva-Thm1} as a continuous extension of the classical superposition principle of Newtonian gravity. Theorem \ref{Will-Iva-alphaThm} itself is proven in Section \ref{GPP:sec}.

\subsection*{Acknowledgments}
Our research has been funded by John S. Rogers Science Research Program at Lewis \& Clark College.

\section{Proof of the Superposition Theorem \ref{Will-Iva-Thm1}.}

\subsection{Review of linear theory}\label{LinTheory:sec}
For a constant $b\in \R$ and a compactly supported smooth function $\varrho$ on $\R^3$ the integral $\int_{\xi \in\R^3}\frac{\varrho(\xi)}{|x-\xi|}\,\dvol_{\xi}$ is absolutely convergent and defines a function 
\begin{equation}\label{GRF-verN}
u(x)=b+\int_{\xi \in\R^3}\frac{\varrho(\xi)}{|x-\xi|}\,\dvol_{\xi}.
\end{equation}
Under the smoothness assumption on $\varrho$ one sees that $u$ is itself smooth, although differentiation of \eqref{GRF-verN} under the integral sign is only appropriate in certain situations. More specifically, differentiation under the integral sign is justified in the case of the first derivatives and in the case of $x\not \in \mathrm{supp}(\varrho)$. 

The function $u(x)$ of \eqref{GRF-verN} is the unique solution to the asymptotic boundary value problem 
$$\Delta_{\geucl} u = -4\pi \varrho,\ \ \lim_{|x|\to \infty} u(x)=b$$ 
and satisfies the asymptotic conditions \eqref{asymptotic-conditions}.
 
We now record boundedness properties of $u(x)$ needed in our paper. The proof is elementary and left to the reader. 

\begin{lemma}\label{WillsLemma3}
\ 
\begin{enumerate}
\item There is a constant $C_+$ which depends only on $\mathrm{supp}(\varrho)$ such that 
$$\|u\|_{L^\infty(\R^3)}\le |b|+ C_+\|\varrho\|_{L^\infty(\R^3)}.$$
\medbreak
\item For each fixed non-negative $\varrho$ and a compact set $K\subseteq \R^3$ there is a positive constant $C_-=C_-(K)>0$ such that 
$$u(x) \ge b+ C_-(K) \text{\ \ for all\ \ }x\in K.$$
\end{enumerate}
\end{lemma}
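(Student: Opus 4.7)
The approach for both parts rests on the Green's representation
\begin{equation*}
u(x) - b \;=\; \int_{\xi \in \R^3} \frac{\varrho(\xi)}{|x-\xi|}\,\dvol_{\xi},
\end{equation*}
together with elementary geometric bounds for the Newtonian kernel.

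For part (1) I would pull $\|\varrho\|_{L^\infty}$ outside the integral, reducing the claim to showing that
\begin{equation*}
C_+ \;:=\; \sup_{x\in\R^3}\int_{\mathrm{supp}(\varrho)} \frac{\dvol_{\xi}}{|x-\xi|}
\end{equation*}
is finite and depends only on $\mathrm{supp}(\varrho)$. Fix $R>0$ with $\mathrm{supp}(\varrho)\subseteq B_R(0)$. For $x\in B_{2R}(0)$ I enlarge the domain of integration to the ball $B_{3R}(x)$ (which contains $\mathrm{supp}(\varrho)$) and apply the standard spherical-coordinates evaluation $\int_{B_s(x)}\frac{\dvol_\xi}{|x-\xi|}=2\pi s^2$. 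For $x\notin B_{2R}(0)$ the kernel is bounded above by $1/R$ on the support of $\varrho$, so the integral is at most $|\mathrm{supp}(\varrho)|/R$. Taking the larger of the two bounds yields a $C_+$ that depends only on $\mathrm{supp}(\varrho)$.

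For part (2), non-negativity of $\varrho$ makes the entire integrand non-negative, so only a positive lower bound on $K$ is needed. Since $\varrho$ is continuous and (implicitly) not identically zero, there is a ball $B_r(p_0)$ on which $\varrho\ge\varepsilon$ for some $\varepsilon>0$. Restricting the integral to that ball and using the crude upper bound $|x-\xi|\le D$, where $D$ is the (finite) diameter of $K\cup \overline{B_r(p_0)}$, gives
\begin{equation*}
u(x)-b\;\ge\; \varepsilon \int_{B_r(p_0)}\frac{\dvol_\xi}{|x-\xi|}\;\ge\;\frac{\varepsilon\,|B_r(p_0)|}{D} \;=:\; C_-(K),
\end{equation*}
as required.

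I do not anticipate any substantive obstacle; the main mildly delicate point is the case split in part (1). One cannot invoke the formula $\int_{B_s(x)}\frac{\dvol_\xi}{|x-\xi|}=2\pi s^2$ with $s$ growing unboundedly, so the far-field regime $|x|\gg R$ has to be treated separately by exploiting the decay of the kernel, while the near-field regime is handled by enlarging the integration domain to a fixed-radius ball centered at $x$.
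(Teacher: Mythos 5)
Your proof is correct, and the paper explicitly leaves this lemma to the reader ("the proof is elementary and left to the reader"), so there is no authorial argument to diverge from; your two-regime bound on the Newtonian potential of the support for part (1) and the kernel lower bound $|x-\xi|^{-1}\ge D^{-1}$ on a ball where $\varrho\ge\varepsilon$ for part (2) are exactly the standard estimates intended. The only point worth flagging is the one you already noted: part (2) tacitly requires $\varrho\not\equiv 0$, which is guaranteed by Definition \ref{matterdist:defn} in all the paper's applications of the lemma.
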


The reader should note that the representation formula 
\begin{equation}\label{GRF-main}
u(x)=b-\frac{1}{4\pi}\int_{\xi \in \R^3}\frac{\Delta_{\geucl} u(\xi)}{|x-\xi|}\,\dvol_{\xi}
\end{equation}
holds more generally -- even in situations when $\varrho=-\tfrac{1}{4\pi}\Delta u$ is not compactly supported. Indeed, one can show that \eqref{GRF-main} holds whenever $u$ satisfies the asymptotic decay conditions \eqref{asymptotic-conditions}.

\subsection{Strategy}
The main strategy in the proof of Theorem \ref{Will-Iva-Thm1} is to use pullback under $\H _{n, p_j}$ where, as in Definition \ref{dust:defn}, 
$$\H_{n, p_j}: y \mapsto x = p_j + y/n.$$ 
We now introduce the notation for (scaled) pullbacks we use throughout our proof. 

\begin{definition}\label{pullb:defn} 
Adapting the notation from Definition \ref{dust:defn}, we define
$$\Omega_{n,p_j} =\tfrac{1}{n}\H_{n,p_j}^*\omega_{\P,n} \text{\ \ and\ \ }
\Theta_{n, p_j} = \tfrac{1}{n}\H _{n, p_j}^{*}\theta_{\P,n}.$$
In addition, define the functions $\Phi_0$ and $\Phi_{n,p_j}$ by 
$$\Omega_0=\Phi_0\,\dvol_{\geucl} \text{\ \ and\ \ }\Omega_{n,p_j}=\Phi_{n,p_j}\,\dvolg.$$
\end{definition}

Let us take the moment to record explicit expressions for $\Omega_{n,p_j}$ and $\Phi_{n, p_j}$.

\begin{remark}\label{pullb}
It follows from $\H _{n, p_i}^{-1}\circ \H_{n,p_j}(y)=y-n(p_i-p_j)$ and Definition \ref{dust:defn} that 
\begin{equation*}\begin{aligned}
\Omega_{n, p_j}(y) = &\H _{n, p_j}^*\left(\sum_{i=1}^{Q} a_i^2\cdot (\H_{n,p_i})_{*} \Omega_0(y) \right)\\
=& \sum_{i=1}^{Q}  a_i^2\cdot \Omega_0(y- n(p_i -p_j)),
\end{aligned}\end{equation*}
Consequently, we have $\Phi_{n,p_j}(y)=  \sum_{i=1}^{Q} a_i^2 \Phi_0(y - n(p_i - p_j))$.
\end{remark}

Since $\H _{n, p_j}^*\Delta_{\geucl}=n^2\Delta_{\geucl}$ and $\H _{n, p_j}^*\dvol_{\geucl}=\tfrac{1}{n^3}\dvol_{\geucl}$, pulling \eqref{littleRPP:eqn} back under $\H _{n, p_j}$ yields 
\begin{equation}\label{BigRPP:eqn}
\Theta_{n,p_j} \Delta_{\geucl} \Theta_{n,p_j} = -4\pi \frac{G}{2c^2}\Phi_{n, p_j}, \  \lim_{|y| \to \infty} \Theta_{n, p_j}(y) = \tfrac{1}{n}
\end{equation} 
with $\Phi_{n,p_j}$ and $\Theta_{n, p_j}$ as in Definition \ref{pullb:defn}. 

The key to proving Theorem \ref{Will-Iva-Thm1} is in showing that for each fixed $p_j$ the sequence of functions $\Theta_{n, p_j}$  converges to $a_j\Theta_0$ as $n\to \infty$. 
It is because of this that we are mainly interested in $\omega_{\P,n}$ for substantially large values on $n$. For example, we always assume that $n$ is large enough so that $\frac{1}{n}\ll \sigma(\P)$ and, specifically, 
$$i\neq j \longrightarrow \mathrm{supp} \left((\H_{n,p_i})_{*} \Omega_0\right)\cap \mathrm{supp} \left((\H_{n,p_j})_{*} \Omega_0\right)=\emptyset.$$
In addition, we frequently make use of the following observation: For a given compact set $K\subseteq \R^3$ there exists $N(K,\sigma)$, depending only on $K$ and $\sigma$, such that for all $n\ge N(K,\sigma)$ we have
$$y\in K,\ i\neq j \longrightarrow y-n(p_i-p_j)\not\in B(0,1).$$

In view of the assumption that $\mathrm{supp}(\Omega_0) = B(0,1)$ (see Definition \ref{dust:defn}) expressions of Remark \ref{pullb} now give us the following.

\begin{lemma}\label{ivaslastminutebs}
Let $K$ be a compact subset of $\R^3$. There exists $N(K,\sigma)\in \mathbb{N}$ such that $\Omega_{n, p_j}=a_j^2 \Omega_0$ for all $n\ge N(K,\sigma)$ and all $p_j$. In particular, we have 
$$\Omega_{n, p_j} \to a_j^2 \Omega_0 \text{\ \ as\ \ } n\to \infty.$$
This convergence is uniform with all derivatives over all compact subsets of $\R^3$.
\end{lemma}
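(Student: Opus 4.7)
\textbf{Proof plan for Lemma \ref{ivaslastminutebs}.}  The plan is to observe that the claim is essentially a re-packaging of the explicit formula for $\Phi_{n,p_j}$ recorded in Remark \ref{pullb}, combined with the support hypothesis on $\Omega_0$ and the observation about disjoint supports stated in the paragraph immediately preceding the lemma.

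First, I would recall from Remark \ref{pullb} the pointwise identity
\[
\Phi_{n,p_j}(y) = \sum_{i=1}^Q a_i^2\, \Phi_0\bigl(y - n(p_i-p_j)\bigr),
\]
and peel off the $i=j$ term, which equals $a_j^2 \Phi_0(y)$. It then suffices to show that, on the fixed compact set $K$, every $i\neq j$ term is identically zero for all sufficiently large $n$. Since by assumption $\mathrm{supp}(\Phi_0) \subseteq B(0,1)$, the $i$-th summand vanishes at $y$ precisely when $y - n(p_i - p_j) \notin B(0,1)$, i.e. when $|y - n(p_i - p_j)| \geq 1$.

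Second, I would produce the threshold $N(K,\sigma)$ explicitly. Let $R := \sup_{y\in K}|y|$ and set $N(K,\sigma) := \lceil (R+1)/\sigma\rceil$. For any $y\in K$, any $i\neq j$, and any $n\geq N(K,\sigma)$, the reverse triangle inequality gives
\[
|y - n(p_i-p_j)| \geq n|p_i - p_j| - |y| \geq n\,\sigma - R \geq 1,
\]
which is exactly the disjoint-support assertion invoked just before the lemma. Combined with the previous step, this yields $\Phi_{n,p_j}(y) = a_j^2\Phi_0(y)$ for all $y\in K$, hence $\Omega_{n,p_j} = a_j^2\Omega_0$ on $K$ for every $n\geq N(K,\sigma)$ and every $j$.

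Finally, the convergence statement is automatic: since $\Omega_{n,p_j} - a_j^2\Omega_0$ vanishes identically on $K$ for all sufficiently large $n$, its $C^k$-norm on $K$ is zero for every $k$, which is far stronger than uniform convergence with all derivatives on compact subsets. There is no real obstacle here; the content of the lemma is purely a support-separation bookkeeping statement, and the only mild care required is the explicit tracking of how $N$ depends on $K$ (through $R$) and on $\P$ (only through $\sigma$), so that the bound is uniform over the index $j$.
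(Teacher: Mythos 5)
Your proposal is correct and follows exactly the paper's route: the explicit formula for $\Phi_{n,p_j}$ from Remark \ref{pullb}, the hypothesis $\mathrm{supp}(\Omega_0)\subseteq B(0,1)$, and the support-separation observation stated just before the lemma, with your explicit choice $N(K,\sigma)=\lceil (R+1)/\sigma\rceil$ merely making the paper's implicit threshold concrete. No issues.
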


For convenience of the reader we also record another consequence of Remark \ref{pullb}. Note that the second of the claims relies on smoothness of $\Phi_0$ and the fact that expressions in Remark \ref{pullb} involve evaluation at $y-n(p_i-p_j)$.

\begin{lemma}\label{hboundp}
We have
\begin{enumerate}
\item $\|\Phi_{n,p_j}\|_{L^\infty(\R^3)}\leq \left(\sum a_i^2\right)\|\Phi_0\|_{L^\infty(\R^3)}$ and
\medbreak 
\item $\|\Phi_{n,p_j}\|_{H^k(\R^3)}\leq \left(\sum a_i^2\right)\|\Phi_0\|_{H^k(\R^3)}$ for all $k\ge 0$.
\end{enumerate}
In particular, the bounds mentioned above are independent of $n$ and our choice of $p_j$. 
\end{lemma}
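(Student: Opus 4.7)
The plan is to exploit the explicit expansion provided by Remark \ref{pullb}, namely
\begin{equation*}
\Phi_{n,p_j}(y) = \sum_{i=1}^{Q} a_i^2\,\Phi_0\bigl(y - n(p_i - p_j)\bigr),
\end{equation*}
and then apply the triangle inequality in each norm, together with the translation invariance of both $L^\infty(\R^3)$ and $H^k(\R^3)$.

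For part (1), I would bound $\Phi_{n,p_j}$ pointwise. For any $y \in \R^3$, the triangle inequality gives
\begin{equation*}
|\Phi_{n,p_j}(y)| \;\le\; \sum_{i=1}^{Q} a_i^2\,\bigl|\Phi_0\bigl(y - n(p_i - p_j)\bigr)\bigr| \;\le\; \Bigl(\sum_{i=1}^{Q} a_i^2\Bigr)\,\|\Phi_0\|_{L^\infty(\R^3)},
\end{equation*}
since the value of $\Phi_0$ at any point of $\R^3$ is bounded by its $L^\infty$ norm. Taking the supremum in $y$ produces the claimed estimate. The fact that the bound is independent of $n$ and of $p_j$ is immediate because only the shifts $n(p_i-p_j)$ depend on these parameters, and the $L^\infty$ norm of $\Phi_0$ is translation invariant.

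For part (2), I would use the triangle inequality directly on the $H^k$ norm:
\begin{equation*}
\|\Phi_{n,p_j}\|_{H^k(\R^3)} \;\le\; \sum_{i=1}^{Q} a_i^2\,\bigl\|\Phi_0\bigl(\,\cdot\, - n(p_i - p_j)\bigr)\bigr\|_{H^k(\R^3)}.
\end{equation*}
Since $H^k(\R^3)$ is defined through $L^2$ norms of weak derivatives on all of $\R^3$, and both differentiation and the Lebesgue measure commute with translations, each shifted copy has the same $H^k$ norm as $\Phi_0$ itself. Substituting this translation invariance into the previous inequality yields the desired estimate, again independent of $n$ and $p_j$.

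The argument is essentially mechanical and I do not anticipate any real obstacle: the only points that merit care are (a) noting that the bound is made uniform in $n$ and $p_j$ precisely because the right-hand side depends on $\Phi_0$ alone, and (b) recording that the triangle-inequality estimate is used in the stated form even though for $n$ large the supports of the individual shifted summands become disjoint (which would give a slightly sharper bound involving $\sum a_i^4$ for the $H^k$ norm); the weaker, uniformly valid bound in the statement is the one needed in later arguments.
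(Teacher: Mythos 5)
Your proof is correct and follows exactly the route the paper intends: the lemma is stated there as an immediate consequence of the expansion $\Phi_{n,p_j}(y)=\sum_i a_i^2\,\Phi_0(y-n(p_i-p_j))$ from Remark \ref{pullb}, combined with the triangle inequality and translation invariance of the $L^\infty$ and $H^k$ norms. Your parenthetical observation about the sharper bound when the supports become disjoint is a correct but inessential refinement.
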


\subsection{Bounds on $\Theta_{n, p_j}$}\label{Bounds2}
As stated earlier, our strategy is to show the sequence $\Theta_{n, p_j}$ converges to $a_j \Theta_0$. We will use a diagonal argument to construct a convergent subsequence of $\Theta_{n, p_j}$. The iterative process within the diagonal argument relies on Rellich Lemma. Thus, we continue by establishing bounds on $\Theta_{n, p_j}$.

\begin{lemma}\label{ThetaBounds:prop}\ 
\begin{enumerate}
\item The functions $\Theta_{n, p_j}$ are bounded in $L^\infty(\R^3)$, independently of $n$ and our choice of $p_j$:
$$\|\Theta_{n, p_j}\|_{L^\infty(\R^3)}\le M_+ \text{\ \ for all\ \ } n, p_j.$$
\medbreak
\item Given a compact set $K$ there exists a positive constant $M_-=M_-(K)$ so that 
\begin{equation*}
\Theta_{n, p_j}(y) \ge a_j^2\cdot M_{-}(K) \ \ \text{for all $n, p_j $ and all $y\in K$}.
\end{equation*}
\end{enumerate}
\end{lemma}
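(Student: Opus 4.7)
The plan is to combine a subharmonic-maximum-principle argument (for the upper bound in part (1)) with the Green's representation formula (for the lower bound in part (2)). First observe that $\Theta_{n,p_j}-\tfrac{1}{n}$ inherits the asymptotic decay conditions \eqref{asymptotic-conditions} from $\theta_{\P,n}-1$ via the pullback $\mathcal H_{n,p_j}$; consequently, \eqref{GRF-main} combined with \eqref{BigRPP:eqn} yields the integral identity
$$
\Theta_{n,p_j}(y)=\tfrac{1}{n}+\tfrac{G}{2c^2}\int_{\R^3}\frac{\Phi_{n,p_j}(\xi)}{|y-\xi|\,\Theta_{n,p_j}(\xi)}\,\dvolg(\xi),
$$
which is the key workhorse. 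Introduce the auxiliary Newtonian potential $V_{n,p_j}(y):=\tfrac{G}{2c^2}\int \Phi_{n,p_j}(\xi)/|y-\xi|\,\dvolg(\xi)$.

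For part (1), the product rule together with \eqref{BigRPP:eqn} gives
$$
\Delta_{\geucl}\!\left(\tfrac{1}{2}\Theta_{n,p_j}^{2}-V_{n,p_j}\right)=\Theta_{n,p_j}\Delta_{\geucl}\Theta_{n,p_j}+|\nabla\Theta_{n,p_j}|^{2}+4\pi\tfrac{G}{2c^2}\Phi_{n,p_j}=|\nabla\Theta_{n,p_j}|^{2}\ge 0,
$$
so this difference is bounded, subharmonic on $\R^{3}$, and tends to $1/(2n^{2})$ at infinity. The standard maximum principle on $\R^{3}$ (applied to balls of increasing radius and passing to the limit, as in Section \ref{LinTheory:sec}) gives $\Theta_{n,p_j}^{2}\le 2V_{n,p_j}+\tfrac{1}{n^{2}}$. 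To bound $V_{n,p_j}$ uniformly in $n,p_j$, use the explicit formula from Remark \ref{pullb}:
$$
V_{n,p_j}(y)=\tfrac{G}{2c^2}\sum_{i=1}^{Q}a_i^{2}\,U_0\bigl(y-n(p_i-p_j)\bigr),\qquad U_0(z):=\int\frac{\Phi_0(\eta)}{|z-\eta|}\,\dvolg(\eta).
$$
Since $\Phi_0$ is smooth and compactly supported, Lemma \ref{WillsLemma3}(1) applied to $U_0$ gives $\|U_0\|_{L^{\infty}}<\infty$, and hence $\|V_{n,p_j}\|_{L^{\infty}}\le \tfrac{G}{2c^2}\bigl(\sum a_i^{2}\bigr)\|U_0\|_{L^{\infty}}$ independently of $n$ and $p_j$, which yields the desired $M_{+}$.

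For part (2), insert the bound $\Theta_{n,p_j}\le M_{+}$ from part (1) into the denominator of the representation formula to obtain
$$
\Theta_{n,p_j}(y)\ge \tfrac{1}{n}+\tfrac{1}{M_{+}}V_{n,p_j}(y)\ge \tfrac{a_j^{2}}{M_{+}}\cdot\tfrac{G}{2c^2}\,U_0(y),
$$
where the last step drops all but the $i=j$ summand (the remaining terms are non-negative). The function $U_0$ is continuous on $\R^{3}$ and, because $\Phi_0\ge 0$ is not identically zero, strictly positive everywhere; therefore it attains a positive minimum on any compact $K$. Setting $M_{-}(K):=\tfrac{G}{2c^{2}M_{+}}\min_{K}U_0$ gives the desired bound.

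The main obstacle is the upper bound: one cannot invoke Lemma \ref{WillsLemma3}(1) directly for $V_{n,p_j}$ because $\mathrm{supp}(\Phi_{n,p_j})$ spreads out as $n\to\infty$, so the support-dependent constant $C_+$ would blow up. The subharmonic-trick reduces the problem to the uniformly bounded $U_0$ rather than to $\Phi_{n,p_j}$ itself, sidestepping this issue. Once the upper bound is in place, the lower bound is essentially immediate from the Green's formula.
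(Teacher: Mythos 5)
Your proof is correct and follows essentially the same strategy as the paper: the upper bound rests on the same key inequality $\Delta_{\geucl}(\Theta_{n,p_j}^2)\ge -8\pi\tfrac{G}{2c^2}\Phi_{n,p_j}$ (you extract the conclusion via the maximum principle for the subharmonic function $\tfrac12\Theta_{n,p_j}^2-V_{n,p_j}$, the paper by applying \eqref{GRF-main} to $\Theta_{n,p_j}^2$ directly), and the lower bound is the identical Green's-formula argument keeping only the $i=j$ summand. One remark on the subtlety you flag: the constant $C_+$ of Lemma \ref{WillsLemma3}(1) does \emph{not} blow up for $\Phi_{n,p_j}$, since for a support consisting of $Q$ unit balls one has $\sup_x\int_{\mathrm{supp}}\tfrac{1}{|x-\xi|}\,\dvolg\le Q\sup_x\int_{B(0,1)}\tfrac{1}{|x-\xi|}\,\dvolg$ independently of the balls' positions; your explicit decomposition of $V_{n,p_j}$ into translates of $U_0$ is a perfectly good (and arguably cleaner) way of making this uniformity explicit.
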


\begin{proof}
We begin by bounding $\Theta_{n, p_j}$ from the above. Consider the fact that 
\begin{equation}\label{keyinequality}
\Delta_{\geucl}(\Theta_{n, p_j}^2) = 2 \Theta_{n, p_j} \Delta_{\geucl} \Theta_{n, p_j} + 2 |d\Theta _{n, p_j}|^2 \geq -4\pi \frac{G}{c^2}\Phi_{n, p_j}.
\end{equation}
Since $\Theta_{n, p_j}$ satisfies \eqref{asymptotic-conditions} so does $\Theta_{n, p_j}^2$. Thus \eqref{GRF-main} still applies to $\Theta_{n,p_j}^2$. Combined with the inequality above we have,
\begin{equation}\label{BigDeal}
\begin{aligned}
\Theta_{n, p_j}^2(y) =  &\left(\frac{1}{n}\right)^2 - \frac{1}{4\pi} \int_{\xi} \frac{1}{|y - \xi|}\Delta_{\geucl }(\Theta^2_{n,p_j})(\xi) \dvol_{\geucl} \\
\leq  &\,\left(\frac{1}{n}\right)^2  + \frac{G}{c^2}\int_{\xi} \frac{\Omega_{n,p_j }(\xi)}{|y-\xi|}.
\end{aligned}
\end{equation}
It now follows from Lemmas \ref{WillsLemma3} and \ref{hboundp} that 
$$\|\Theta_{n, p_j}^2\|_{L^\infty(\R^3)} \leq \,1 + C_+\cdot \left(\sum a_i^2\right)\,\|\Phi_0\|_{L^\infty(\R^3)},$$
for some universal constant $C_+$. In particular, it follows that the functions $\Theta_{n,p_j}$ are bounded in $L^\infty(\R^3)$, independently of $n$ and our choice of $p_j$. 

We move on to show $\Theta_{n, p_j}$ are bounded away from zero over compact subsets of $\R^3$. Let $M_+$ be the $L^\infty$-upper bound we established thus far. From Green's representation formula \eqref{GRF-main} we have 
\begin{equation*}
\Theta_{n, p_j}(y) \geq  \frac{G}{2c^2}\int_{\xi} \frac{\Omega_{n, p_j}(\xi)}{|y-\xi|\Theta_{n, p_j}(\xi)} \geq  \frac{G}{2c^2}\,\frac{a_j^2}{M_+}\int_{\xi} \frac{\Phi (\xi)}{|y-\xi|}\dvolg.
\end{equation*}
Our claim is now an immediate consequence of the lower bound discussed in Lemma \ref{WillsLemma3}.
\end{proof}

Next, we address the behavior of $\Theta_{n, p_j}(y)$ as $y$ approaches infinity.

\begin{lemma}\label{ThetaDBC:prop}
Let $\e>0$, and let $p_j$ be fixed. There exists $L>1$ such that for all $|y| > L$ there exists $N \in \mathbb{N}$ so that when $n\ge N$ we have
$$\Theta_{n, p_j}(y) < \epsilon.$$
\end{lemma}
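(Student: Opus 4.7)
The plan is to recycle the Green's Representation inequality that was already established in the proof of Lemma \ref{ThetaBounds:prop}. Since $\Theta_{n,p_j}^{2}$ inherits the asymptotic decay \eqref{asymptotic-conditions} (with boundary value $1/n^{2}$) from $\Theta_{n,p_j}$, and since $\Delta_{\geucl}(\Theta_{n,p_j}^{2}) \ge -4\pi\tfrac{G}{c^{2}}\Phi_{n,p_j}$, estimate \eqref{BigDeal} continues to hold:
$$\Theta_{n,p_j}^{2}(y) \;\le\; \frac{1}{n^{2}} \;+\; \frac{G}{c^{2}}\int_{\xi}\frac{\Phi_{n,p_j}(\xi)}{|y-\xi|}\,\dvolg.$$
So my task reduces to arranging that, for $|y|>L$ and $n \ge N$, each of the three pieces on the right is at most $\e^{2}/3$.

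To analyze the potential integral I would use Remark \ref{pullb} to decompose $\Phi_{n,p_j}$ as a sum and change variables $\eta=\xi - n(p_{i}-p_{j})$ in the $i$-th summand:
$$\int_{\xi}\frac{\Phi_{n,p_j}(\xi)}{|y-\xi|}\,\dvolg \;=\; \sum_{i=1}^{Q} a_{i}^{2}\int_{\eta\in B(0,1)}\frac{\Phi_{0}(\eta)}{|y - n(p_{i}-p_{j}) - \eta|}\,\dvolg.$$
The diagonal piece $i=j$ is handled uniformly in $n$: using $|\eta|\le 1$ I get $|y-\eta|\ge |y|-1$, so this term is bounded by $a_{j}^{2}\|\Phi_{0}\|_{L^{1}}/(|y|-1)$, which is a function of $|y|$ alone that tends to $0$. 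For the off-diagonal pieces $i\ne j$, the reverse triangle inequality gives
$$|y - n(p_{i}-p_{j}) - \eta| \;\ge\; n\,|p_{i}-p_{j}| - |y| - 1 \;\ge\; n\sigma - |y| - 1,$$
valid whenever $n\sigma > |y|+1$. For each fixed $y$, the off-diagonal contribution is therefore dominated by $\sum_{i\ne j} a_{i}^{2}\|\Phi_{0}\|_{L^{1}}/(n\sigma - |y| - 1)$, which vanishes as $n\to\infty$.

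Assembling the pieces: given $\e>0$, I would \emph{first} choose $L>1$ so large that the diagonal bound $a_{j}^{2}\|\Phi_{0}\|_{L^{1}}/(L-1)$ contributes less than $\e^{2}/3$ for every $|y|>L$; this choice is independent of $n$. \emph{Then}, having fixed an arbitrary $y$ with $|y|>L$, I would choose $N=N(y,\e)$ large enough that $1/n^{2} < \e^{2}/3$ and $n\sigma - |y| - 1$ is large enough to force the off-diagonal sum below $\e^{2}/3$ for all $n\ge N$. Taking square roots finishes the proof.

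The subtlety worth flagging is the order of quantifiers: $N$ is permitted to depend on $y$ (and not merely on $L$), which is exactly what the statement allows. This looseness is essential, because for $i\neq j$ the pulled-back matter sits near $n(p_{i}-p_{j})$ and thus marches off to spatial infinity with $n$; any attempt to make $N$ uniform in $|y|>L$ would require ruling out arbitrarily large $|y|$, which \eqref{BigDeal} alone cannot do. Beyond this bookkeeping, no new analytic input is needed: everything follows from the inequality $\Delta_{\geucl}(\Theta^{2})\ge -4\pi\tfrac{G}{c^{2}}\Phi$ together with the Newtonian decay encoded in Lemma \ref{WillsLemma3}.
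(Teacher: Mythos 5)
Your argument is correct and is essentially the paper's own proof: both start from the inequality $\Delta_{\geucl}(\Theta_{n,p_j}^2)\ge -4\pi\tfrac{G}{c^2}\Phi_{n,p_j}$ and the Green's representation bound \eqref{BigDeal}, decompose $\Phi_{n,p_j}$ via Remark \ref{pullb}, bound the $i=j$ term by a quantity of order $1/L$ and the $i\neq j$ terms by a quantity of order $1/(n\sigma)$, and let $N$ depend on $y$ exactly as the quantifier order in the statement permits. The only differences are cosmetic (an $\e^2/3$ split instead of $\e^2/2$, and a relabeled change of variables).
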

\begin{proof}
Let $L>2$ be large enough so that 
$$\frac{G}{c^2}\cdot\frac{2}{L}\,a_j^2\, \|\Omega_0\|_{L^1(\R^3)} <\frac{\e^2}{2}.$$
Choose $N$ large enough so that  
$$\left(\frac{1}{N}\right)^2 +\frac{G}{c^2}\cdot\frac{4\sigma}{N}\left(\sum a_i^2\right)\,\|\Omega_0\|_{L^1(\R^3)}<\frac{\e^2}{2}.$$
Let $y$ be such that $|y|>L$. Further increase $N$ so that 
$$L<|y|<N\sigma/2.$$ 
Let $\eta\in B(0, 1)$. Note that 
\begin{equation}\label{1/l}
\frac{1}{|y - \eta|} \leq \frac{1}{|y| - |\eta|} \leq \frac{1}{L - 1} \leq \frac{2}{L}.
\end{equation}
for all $n\ge N$. In addition, if $i\neq j$ we also have that 
\begin{equation}\label{1/n}
\begin{aligned}
\frac{1}{|y + n(p_i -p_j) - \eta|} \leq &\,\frac{1}{|n(p_i-p_j)|- |y| - |\eta|}\\  
\leq &\, \frac{1}{(n\sigma/2) - 1}\le \frac{4/\sigma}{n}.
\end{aligned}
\end{equation}

Arguing as in the proof of Lemma \ref{ThetaBounds:prop} (e.g see \eqref{BigDeal}) and employing the change of coordinates $\xi =\eta-n(p_i-p_j)$ yields
\begin{align*}
\Theta_{n, p_j}^2(y) \leq &\left(\frac{1}{n}\right)^2  +  \frac{G}{c^2}\int_{\xi} \frac{\Omega_{n, p_j }(\xi)}{|y-\xi|}  \\ 
= &\left(\frac{1}{n}\right)^2 + \frac{G}{c^2}\sum_{i=1}^{Q} a_i^2 \int_{\eta}  \frac{\Omega_0 (\eta)}{|y + n(p_i - p_j)-\eta|}.
\end{align*}
We continue by distinguishing the cases of $i=j$ and $i\neq j$ within the summation. Employing \eqref{1/l} and \eqref{1/n} yields 
$$\Theta_{n, p_j}^2(y) \leq \left(\frac{1}{n}\right)^2 +  \frac{G}{c^2}\left( a_j^2 \cdot \frac{2}{L} + \sum_{i \neq j} a_i^2 \cdot \frac{4/\sigma}{n}\right)\cdot \|\Omega_0\|_{L^1(\R^3)}.$$
Our choice of $L$ and $N$ is made so that 
$$\Theta_{n, p_j}^2(y) \leq \frac{\e^2}{2}+\frac{\e^2}{2}=\e^2.$$
This completes our proof.
\end{proof}

\subsection{Convergence of $\Theta_{n,p_j}$}

\begin{lemma}\label{ExUnT:prop}
For a fixed $p_j$ the sequence $\Theta_{n, p_j}$ converges uniformly with all derivatives over all compact subsets of $\R^3$ to $a_j\Theta_0$.
\end{lemma}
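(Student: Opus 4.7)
The plan is to execute the diagonal subsequence strategy announced at the start of Subsection \ref{Bounds2}. I would first extract a subsequence of $\Theta_{n,p_j}$ converging locally in $C^k$ for every $k$, then identify its limit as a solution of the RPP with source $a_j^2\,\Omega_0$ and boundary value $b=0$, and finally invoke the uniqueness part of Theorem \ref{ThetaF:prop} together with the subsequence principle to upgrade the partial convergence to convergence of the full sequence. The candidate limit is of course $a_j\Theta_0$, because
\begin{equation*}
(a_j\Theta_0)\,\Delta_{\geucl}(a_j\Theta_0)\,\dvolg = a_j^2\,\Theta_0\Delta_{\geucl}\Theta_0\,\dvolg = -4\pi\tfrac{G}{2c^2}\,a_j^2\,\Omega_0
\end{equation*}
by the normalization of Remark \ref{normalization:remark}, and clearly $\lim_{|y|\to\infty}a_j\Theta_0(y)=0$.

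For the compactness step I would rewrite the PDE \eqref{BigRPP:eqn} as
\begin{equation*}
\Delta_{\geucl}\Theta_{n,p_j}=-4\pi\tfrac{G}{2c^2}\,\frac{\Phi_{n,p_j}}{\Theta_{n,p_j}}
\end{equation*}
and observe that Lemma \ref{ThetaBounds:prop} gives uniform upper bounds on $\Theta_{n,p_j}$ and positive lower bounds on every compact set, while Lemma \ref{hboundp} supplies uniform Sobolev bounds on $\Phi_{n,p_j}$. A standard elliptic bootstrap (first $\Delta_{\geucl}\Theta_{n,p_j}\in L^\infty_{\mathrm{loc}}$ gives $W^{2,p}_{\mathrm{loc}}$ bounds for any $p$, Sobolev then $C^{1,\alpha}_{\mathrm{loc}}$ bounds, which feed back into the right-hand side) then yields $H^k_{\mathrm{loc}}$ bounds on $\Theta_{n,p_j}$ that are uniform in $n$ and in $p_j$. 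Fixing an exhaustion of $\R^3$ by nested precompact open sets, Rellich compactness together with a diagonal extraction produces a subsequence, still denoted $\Theta_{n,p_j}$, converging in $C^k_{\mathrm{loc}}(\R^3)$ for every $k$ to a smooth limit $\Theta_\infty$ which is positive on each compact set. Thanks to Lemma \ref{ivaslastminutebs} we may pass to the limit inside the PDE to obtain
\begin{equation*}
\Theta_\infty\Delta_{\geucl}\Theta_\infty\,\dvolg=-4\pi\tfrac{G}{2c^2}\,a_j^2\,\Omega_0.
\end{equation*}

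The main obstacle is matching the behavior of $\Theta_\infty$ at infinity against the uniqueness criterion of Theorem \ref{ThetaF:prop}: the convergence is purely local, while uniqueness demands the global condition $\lim_{|y|\to\infty}\Theta_\infty(y)=0$ together with the decay \eqref{asymptotic-conditions}. This is precisely what Lemma \ref{ThetaDBC:prop} gives: for each $\e>0$ it supplies a radius $L$ such that $\Theta_{n,p_j}(y)<\e$ for every $|y|>L$ once $n$ is large enough; taking the limit along the extracted subsequence yields $\Theta_\infty(y)\le \e$ for all $|y|>L$, so $\Theta_\infty$ vanishes at infinity. The full set of asymptotic conditions \eqref{asymptotic-conditions} for $\Theta_\infty$ is then recovered automatically by inserting $\Theta_\infty$ into Green's representation \eqref{GRF-main} and noting that $a_j^2\Omega_0/\Theta_\infty$ is supported in the unit ball. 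Uniqueness in Theorem \ref{ThetaF:prop} now forces $\Theta_\infty=a_j\Theta_0$. Because any subsequence of $\{\Theta_{n,p_j}\}$ admits a further subsequence converging in $C^k_{\mathrm{loc}}$ to the same limit $a_j\Theta_0$, the full sequence converges to $a_j\Theta_0$ uniformly with all derivatives on compact subsets of $\R^3$, as claimed.
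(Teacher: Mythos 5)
Your proposal is correct and follows essentially the same route as the paper: uniform local $H^k$ bounds via elliptic bootstrapping, Rellich plus a diagonal extraction, identification of the limit through Lemma \ref{ivaslastminutebs} and Lemma \ref{ThetaDBC:prop}, uniqueness from Theorem \ref{ThetaF:prop}, and the subsequence principle (which the paper phrases as a contradiction argument) to pass to the full sequence. The only cosmetic difference is that your extra step recovering \eqref{asymptotic-conditions} for $\Theta_\infty$ is not needed, since the uniqueness in Theorem \ref{ThetaF:prop} already applies to any positive solution satisfying the boundary condition at infinity.
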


\begin{proof} The proof of the lemma is a multi-step process. First we construct a convergent subsequence of $\Theta_{n,p_j}$ using a diagonal argument. Next, we record important characteristics about the diagonal sequence's limiting function. Finally, we argue by contradiction that the full sequence converges to said limit function.  

Fix a chain of compact subsets of $\R^3$ such that 
\begin{equation}\label{Ksub}
K_0 \subseteq \text{Int}(K'_0) \subseteq K'_0 \subseteq K_1 \subseteq \text{Int}(K'_1) \subseteq K'_1 \subseteq ... \subseteq \R^3,
\end{equation}
and $\bigcup _i K_i = \R^3$. From the Interior Elliptic Regularity \cite{Jost} we have
\begin{equation}\label{H2Theta}
\norm{\Theta_{n,p_j}}_{H^2(K'_0)} \cle \norm{\Phi_{n, p_j } \Theta_{n, p_j}^{-1}}_{L^2(K_1)} + \norm{\Theta_{n,p_j}}_{L^2(K_1)}.
\end{equation}
Boundedness of $\Theta_{n,p_j}$ and $\Theta_{n,p_j}^{-1}$ in $L^2(K_1)$ follow from Lemma \ref{ThetaBounds:prop} while boundedness of $\Phi_{n, p_j}$ in $L^2(K_1)$ follows from Lemma \ref{hboundp}. Overall, we have boundedness of $\Theta_{n,p_j}$ in $H^2(K'_0)$. Observe that we in addition have 
\begin{equation*}
\norm{\Theta_{n,p_j}}_{H^4(K_0)} \cle \norm{\Phi_{n, p_j } \Theta_{n, p_j}^{-1}}_{H^2(K'_0)} + \norm{\Theta_{n,p_j}}_{L^2(K'_0)}.
\end{equation*}
Here, the boundedness of $\Phi_{n, p_j }$ and $\Theta_{n, p_j}^{-1}$ in $H^2(K'_0)$ follows from Lemma \ref{hboundp}, Lemma \ref{ThetaBounds:prop} and \eqref{H2Theta}. We are now able to conclude boundedness of $\Theta_{n,p_j}$ in $H^4(K_0)$. In fact, bootstrapping like this gives boundedness of $\Theta_{n, p_j}$ in $H^k(K_0)$ for arbitrarily large $k$. 

With the intention of using a diagonal argument, we now inductively construct convergent subsequences $\Theta_{n, p_j}^{(i)}$ from $\Theta_{n, p_j}$ with repeated applications of Rellich Lemma across the chain of compact subsets \eqref{Ksub}. We construct the subsequences $\Theta_{n, p_j}^{(i)}$ so that they converge in $H^4(K_i) \subseteq C^2(K_i)$ and are subsequences of the previous $\Theta_{n, p_j}^{(i-1)}$.

By construction, the diagonal subsequence $\Theta_{n, p_j}^{(n)}$ converges uniformly with two derivatives over all compact subsets of $\R ^3$. Denote the limit function by $\Theta_\infty$. Next, we argue that 
\begin{equation}\label{limitbc}
\lim_{|y| \to \infty} \Theta_{\infty}(y) =0.
\end{equation}
To that end let $\e>0$. By Lemma \ref{ThetaDBC:prop} there is $L>0$ and an $N\in \mathbb{N}$ such that for all $|y|>L$ and all $n\ge N$ we have 
$$\Theta_{n,p_j}^{(n)}(y)<\e.$$ 
Upon taking the limit as $n\to \infty$ we obtain 
$$\Theta_\infty(y)\le \e,$$
proving \eqref{limitbc}.

Applying Lemma \ref{ivaslastminutebs} to RPP satisfied by $\Theta_{n,p_j}^{(n)}$ we see that
$\Theta_\infty$ satisfies
\begin{equation*}
\Theta_{\infty} \Delta_{\geucl} \Theta_{\infty} \dvol_{\geucl} = -4\pi \tfrac{G}{2c^2} a_j^2\Omega_0,  \ \lim_{|y| \to \infty} \Theta_{\infty}(y) =0.
\end{equation*}
Furthermore, Theorem \ref{ThetaF:prop} states that solutions to the above are unique. In fact, we have already determined (see  \eqref{m^2:eqn} above) that $a_j\Theta_0$ is this unique solution and therefore $\Theta_\infty=a_j \Theta_0$.

It remains to show that the full sequence $\Theta_{n, p_j}$ converges to $a_j \Theta_0$ over all compact subsets of $\R^3$. To see this suppose towards a contradiction that there exists some compact set $K\subseteq \R^3$, some $\epsilon_0 > 0$ and a subsequence $\Theta_{n_{k},p_j}$ such that 
\begin{equation}\label{rnd:est}
\norm{\Theta_{n_{k},p_j} - a_j \Theta_0}_{L^{\infty}(K)} \ge \epsilon_0
\end{equation}
for every $k$. We have already shown that a subsequence of the sequence of solutions $\Theta_{n_{k}, p_j}$ of the RPP corresponding to $\Omega_{n_{k}, p_j}$ can be constructed so that $\Theta_{n_{k}, p_j}$ converges to $a_j \Theta_0$. This contradicts \eqref{rnd:est} and proves that $\Theta_{n, p_j} \to a_j \Theta_0$ in $L^{\infty}(K)$ for all compact K. To show the convergence is uniform with all the derivatives, we perform the induction on $l$ in the Interior Elliptic Regularity Estimate
\begin{align*}
&\norm{\Theta_{n, p_j} - a_j \Theta_0}_{H^{l+2}(K)} \\ & \cle \norm{\Phi_{n, p_j}\Theta_{n, p_j}^{-1}-a_j\Phi_0\Theta_0^{-1}}_{H^l(K')} + \norm{\Theta_{n, p_j} - a_j\Theta_0}_{L^2(K')}.
\end{align*}
Our proof is now complete. 
\end{proof}

We use the convergence $\Theta_{n, p_j} \to a_j \Theta_0$ to provide the proof of Theorem \ref{Will-Iva-Thm1} in Section \ref{TheProof:sec}. We end this section by recording two additional convergences. 

\begin{lemma}\label{niceone:lemma}
Fix $r_0 > 0$ and let $|x| \geq  r_0$. For each multiindex $l$ and the corresponding partial derivative $\partial^l$ we have the following $L^\infty$-convergence of functions of $\nu \in B(0, 1)$:
\begin{equation*}
\left\|\partial_x^l\left(\frac{1}{|x - \nu/n|}\right) - \partial_x^l\left(\frac{1}{|x|}\right)\right\|_{L^\infty(B(0,1))}\to 0\ \ \text{as}\ \ n\to \infty.
\end{equation*}
Furthermore, the stated convergences are uniform with respect to $x\in \R^3\smallsetminus B(0,r_0)$.
\end{lemma}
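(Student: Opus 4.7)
The plan is to reduce the claim to a single elementary observation: on the region of interest, the function $\frac{1}{|x-y|}$ and all of its $x$-derivatives are smooth and uniformly bounded together with their $y$-derivatives. The statement then becomes a straightforward application of the mean value theorem in the $y$ variable evaluated at $y = \nu/n$.

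First, I would fix $n$ large enough that $1/n \leq r_0/2$; then for every $x$ with $|x| \geq r_0$ and every $\nu \in B(0,1)$ we have $|x - \nu/n| \geq |x| - 1/n \geq r_0/2$, and of course $|x| \geq r_0 > r_0/2$. On the open set $\{(x,y) : |x-y| > r_0/2\}$ the function $(x,y) \mapsto 1/|x-y|$ is $C^\infty$, and a direct computation shows that its mixed partials satisfy an estimate of the form
\begin{equation*}
\Bigl|\partial_y^k\partial_x^l\Bigl(\tfrac{1}{|x-y|}\Bigr)\Bigr| \leq \frac{C_{k,l}}{|x-y|^{|k|+|l|+1}} \leq C_{k,l}\,(r_0/2)^{-(|k|+|l|+1)},
\end{equation*}
where $C_{k,l}$ is a combinatorial constant independent of $x, y, n, \nu$.

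Now fix a multiindex $l$ and, with $x$ held fixed, define $F(y) = \partial_x^l\!\left(\tfrac{1}{|x-y|}\right)$. Applying the fundamental theorem of calculus along the segment from $0$ to $\nu/n$ gives
\begin{equation*}
F(\nu/n) - F(0) = \int_0^1 \nabla_y F(t\nu/n)\cdot\tfrac{\nu}{n}\,dt,
\end{equation*}
so
\begin{equation*}
\bigl|F(\nu/n) - F(0)\bigr| \leq \frac{|\nu|}{n}\,\sup_{|y|\leq 1/n}\bigl|\nabla_y F(y)\bigr| \leq \frac{1}{n}\cdot C_l\,(r_0/2)^{-(|l|+2)}.
\end{equation*}
The right-hand side is independent of $\nu\in B(0,1)$ and of $x$ with $|x|\geq r_0$, and tends to $0$ as $n\to\infty$. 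Taking the supremum over $\nu \in B(0,1)$ and then over $x \in \R^3\smallsetminus B(0,r_0)$ yields the stated uniform convergence.

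There is no real obstacle here; the only item deserving slight care is verifying the pointwise derivative bound $|\partial_y^k\partial_x^l(1/|x-y|)|\lesssim |x-y|^{-(|k|+|l|+1)}$, which follows from induction on $|k|+|l|$ together with the chain rule applied to $|x-y|^{-1}$. Once that bound is in hand, the mean value argument above closes the proof immediately.
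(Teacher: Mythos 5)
Your proof is correct, and it closes the whole lemma (all multiindices $l$) in one uniform argument, whereas the paper only writes out the algebraic identity
$\frac{1}{|x-\nu/n|}-\frac{1}{|x|}=\frac{|x|-|x-\nu/n|}{|x|\,|x-\nu/n|}$
plus the triangle inequality for the case $l=0$ and leaves the derivatives to the reader. The two routes differ in mechanism: the paper exploits the special algebraic structure of $1/|x-y|$ to get the explicit bound $2/(nr_0^2)$ directly, while you apply the fundamental theorem of calculus in the $y$ variable and invoke the homogeneity bound $|\partial_y^k\partial_x^l(1/|x-y|)|\lesssim |x-y|^{-(|k|+|l|+1)}$, which is exactly what is needed to handle arbitrary $l$ without redoing the algebra each time; your rate $O(1/n)$ matches the paper's, and the uniformity in $x$ and $\nu$ is tracked correctly since the segment from $0$ to $\nu/n$ stays in $B(0,1/n)$ and hence $|x-t\nu/n|\ge r_0/2$ throughout. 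In short, your argument is a clean and complete way to supply the part of the proof the paper omits.
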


The proof of Lemma \ref{niceone:lemma} is an elementary consequence of the triangle inequality 
\begin{equation*}
\frac{1}{|x - \nu/n|} - \frac{1}{|x|} = \frac{|x|-|x- \nu/n|}{|x||x - \nu/n|} \leq \frac{|\nu|/n}{|x|\cdot(|x| - |\nu|/n)},
\end{equation*}
which for sufficiently large $n$ implies  
$$\frac{1}{|x - \nu/n|} - \frac{1}{|x|}\leq \frac{2}{nr_0^2},$$
and as such is left to the reader. It is now a corollary of Lemma \ref{ExUnT:prop} and Lemma \ref{niceone:lemma} that 
\begin{equation}\label{WillsOldLemma}
\left\|\partial_x^l\left(\frac{\Omega_0(\nu)}{|x - \nu/n|\Theta_{n, p_j}(\nu)}\right) - \partial_x^l\left(\frac{\Omega_0(\nu)}{|x|\cdot a_j\Theta_0(\nu)}\right)\right\|_{L^\infty}\to 0
\end{equation}
as $n\to \infty$. Just as in Lemma \ref{niceone:lemma}, the convergences are uniform with respect to $x\in \R^3\smallsetminus B(0,r_0)$. It is really in the form of \eqref{WillsOldLemma} that the convergence result of Lemma \ref{ExUnT:prop} is used in the proof of Theorem \ref{Will-Iva-Thm1}.

\subsection{Proof of Theorem \ref{Will-Iva-Thm1}}\label{TheProof:sec}

Recall that this theorem addresses the limit behavior of the sequence $\theta_{\P,n}$ of solutions of 
$$\theta_{\P,n} \Delta_{\geucl} \theta_{\P,n} = -4\pi \tfrac{G}{2c^2} \omega_{\P,n}, \  \lim_{|x| \to \infty} \theta_{\P,n}(x) = 1,$$
where $\omega_{\P,n}$ is as in Definition \ref{dust:defn}.

\begin{proof} 
Fix a compact subset $K$ of $\R^3\smallsetminus\{p_1, .., p_Q\}$ and let $x\in K$. 
We analyze $\theta_{\P,n}(x)$ by applying the Green's representation formula \eqref{GRF-main} and by expanding $\omega_{\P,n}$ according to Definition \ref{dust:defn}:
\begin{equation}\label{GRF-applied}
\begin{aligned}
\theta_{\P,n}(x)=&1+\frac{G}{2c^2}\int_{\xi} \frac{\omega_{\P,n}(\xi)}{|x-\xi|\theta_{\P,n}(\xi)}\\ =&1+\frac{G}{2c^2}\sum_{i=1}^{Q} a_i^2\int_{\xi} \frac{(\H_{n,p_i})_{*}\Omega_0(\xi)}{|x-\xi|\cdot \frac{1}{n}\theta_{\P,n}(\xi)}.
\end{aligned}
\end{equation}
Next, we study each term in the summation \eqref{GRF-applied} individually, with the intention of using a $\frac{\epsilon}{\sum a_i^2}$-argument at the very end.
Changing coordinates according to $\H_{n,p_i}$, by which we mean setting $\xi = p_i + \nu/n$, yields
\begin{equation*}
\int_{\xi} \frac{(\H_{n,p_i})_{*}\Omega_0(\xi)}{|x-\xi|\cdot \frac{1}{n}\theta_{\P,n}(\xi)}=
\int_{\nu} \frac{\Omega_0(\nu)}{|x-p_i - \nu/n|\Theta_{n, p_i}(\nu)}.
\end{equation*}
Now consider the fact that due to the normalization on $\Omega_0$ (see Remark \ref{normalization:remark}) we have
$$\frac{1/a_j}{|x -p_j|}=\int_{\nu} \frac{\Omega_0(\nu)}{|x -p_j|\cdot a_j\Theta_0(\nu)}.$$
Since $p_i\not\in K$ there is some $r_0>0$ such that $|y-p_i|\ge r_0$ and hence Lemma \ref{niceone:lemma} applies. It follows that for any given $\e>0$ and suitably large $n$ the following holds: 
$$\begin{aligned}
&\left| \left( \int_{\nu} \frac{\Omega_0(\nu)}{|x-p_j - \nu/n|\Theta_{n, p_j}(\nu)} \right) - \frac{1/a_j}{|x -p_j|}  \right|\\
\le& \int_{\nu} \left |\frac{\Omega_0(\nu)}{|x-p_j - \nu/n|\Theta_{n, p_j}(\nu)}  - \frac{\Omega_0(\nu)}{|x -p_j|\cdot a_j\Theta_0(\nu)} \right |< \frac{\epsilon}{\sum a_i^2}.
\end{aligned}$$
In combination with \eqref{GRF-applied} this completes the proof of the $L^\infty(K)$ convergence 
$$\theta_{\P,\alpha, n}(x)\to 1+ \frac{G}{2c^2}\sum_{i=1}^{Q}  \frac{a_i}{|x -p_i|}.$$
For sufficiently large $n$ the expression \eqref{GRF-applied} can be differentiated under the integral sign with respect to $x$; see Section \ref{LinTheory:sec}. Thus the exact same line of reasoning as above also proves the claim about the derivatives with respect to $x$.
\end{proof}

\section{One-parameter family of non-linear superposition principles}\label{GPP:sec}

The purpose of this section is to provide the proofs of our two GPP results: Theorem \ref{AngryIvaThm} and Theorem \ref{Will-Iva-alphaThm}.

\subsection{Special case of Theorem \ref{AngryIvaThm}} 
We begin by proving a special case of Theorem \ref{AngryIvaThm}, the case when $b=1$. For notational convenience we drop the explicit reference to $b=1$ in the subscript.

\begin{lemma}\label{NoahThm}
For each $\alpha\in[0,1]$ there exists a unique solution $\theta=\theta_\alpha$ of 
$$\theta^\alpha \Delta_{\geucl}\theta \dvol_{\geucl}=-4\pi \tfrac{G}{2c^2} \omega,\ \ \lim_{x\to \infty} \theta(x)=1,$$
which in addition satisfies the asymptotic conditions \eqref{asymptotic-conditions}.
The family $\theta_\alpha$ is continuous in $\alpha$ in the sense that for all convergent sequences $\alpha_n \to \alpha$ we have convergences $\theta_{\alpha_n}\to \theta_\alpha$ with all derivatives on all compact subsets of $\R^3$.
\end{lemma}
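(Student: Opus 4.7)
The plan is to recast the GPP as an integral equation amenable to Schauder's fixed point theorem. Writing $\theta = 1 + u$ with $u \geq 0$ and $\omega = \phi\,\dvolg$, the representation formula \eqref{GRF-main} suggests the fixed-point equation
$$u(x) = T_\alpha(u)(x) := \frac{G}{2c^2}\int_{\xi \in \R^3}\frac{\phi(\xi)}{|x-\xi|\,(1+u(\xi))^\alpha}\,\dvolg.$$
Since $\alpha \geq 0$ and $u \geq 0$ give $(1+u)^{-\alpha} \leq 1$, the operator $T_\alpha$ satisfies $0 \leq T_\alpha(u) \leq u_{\max}$ where $u_{\max} := T_\alpha(0)$, which lies in $C_0(\R^3)$ by Lemma \ref{WillsLemma3}. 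Hence $T_\alpha$ sends the closed convex set
$$K := \{u \in C_0(\R^3) : 0 \leq u \leq u_{\max}\}$$
into itself. Continuity of $T_\alpha$ on $K$ with respect to the sup norm is routine via dominated convergence, while compactness of $T_\alpha(K)$ follows from the standard estimates for Newtonian potentials of uniformly bounded, compactly supported densities: uniformly bounded first derivatives on $\R^3$ together with uniform $O(1/|x|)$ decay. Schauder's fixed point theorem then produces a fixed point $u \in K$; a standard elliptic bootstrap (using that $t \mapsto t^{-\alpha}$ is smooth on $[1, \infty)$) upgrades $u$ to smooth, giving the desired solution $\theta_\alpha = 1 + u$. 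Differentiation of the integral representation off $\mathrm{supp}(\phi)$, exactly as in Section \ref{LinTheory:sec}, then delivers the asymptotic conditions \eqref{asymptotic-conditions}.

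For uniqueness I would use a subharmonicity argument. If $\theta_1, \theta_2$ are two solutions (necessarily $\theta_i \geq 1$ by the above), then $w := \theta_1 - \theta_2$ satisfies $\Delta w = -4\pi\tfrac{G}{2c^2}\phi(\theta_1^{-\alpha} - \theta_2^{-\alpha})$ and tends to $0$ at infinity. On the open set $\{w > 0\}$, monotonicity of $t \mapsto -t^{-\alpha}$ gives $\Delta w \geq 0$, so $w$ is subharmonic there; since $w$ vanishes on $\partial\{w > 0\}$ and at infinity, the maximum principle forces $w \leq 0$ on $\{w > 0\}$, a contradiction unless $\{w > 0\} = \emptyset$. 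Swapping the roles of $\theta_1$ and $\theta_2$ then yields $\theta_1 \equiv \theta_2$.

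Continuity in $\alpha$ will follow by the now-familiar compactness-plus-uniqueness script of Lemma \ref{ExUnT:prop}. Given $\alpha_n \to \alpha$, the uniform bound $1 \leq \theta_{\alpha_n} \leq 1 + u_{\max}$ together with bootstrap applications of interior elliptic regularity to $\Delta \theta_{\alpha_n} = -4\pi\tfrac{G}{2c^2}\phi\,\theta_{\alpha_n}^{-\alpha_n}$ yields uniform $H^k_{loc}$ bounds. Rellich-Kondrakov together with a diagonal argument extracts a subsequential $C^k_{loc}$ limit $\theta_\infty$, and dominated convergence in the integral equation shows $\theta_\infty$ solves the $\alpha$-GPP. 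By uniqueness $\theta_\infty = \theta_\alpha$, and the standard contradiction argument upgrades subsequential convergence to full convergence.

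The main obstacle I anticipate is ensuring that the subsequential limit $\theta_\infty$ inherits the correct boundary condition $\theta_\infty(x) \to 1$ at infinity, rather than merely some constant $c \leq 1 + \|u_{\max}\|_{\infty}$. This is overcome by the uniform-in-$n$ bound $|\theta_{\alpha_n}(x) - 1| \leq u_{\max}(x)$, whose right-hand side is independent of $n$ and tends to zero at infinity, so the decay persists in the limit and forces the correct asymptotic normalization.
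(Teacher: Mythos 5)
Your argument is correct, but it takes a genuinely different route from the paper on both of its main fronts. For existence, the paper runs the alternating iteration $\theta_{m+1}=1+\tfrac{G}{2c^2}\int \tfrac{\omega(\xi)}{|x-\xi|\,\theta_m^\alpha(\xi)}$, shows the even and odd subsequences converge to ordered limits $\theta_-\le\theta_+$, and then needs a rather delicate Jensen's-inequality-plus-strong-maximum-principle argument applied to $k\theta_-+(1-k)\theta_+$ to force $\theta_-=\theta_+$; your Schauder fixed-point argument on $K=\{u\in C_0(\R^3):0\le u\le u_{\max}\}$ bypasses that coincidence step entirely, at the modest cost of verifying compactness of the Newtonian-potential operator in $C_0(\R^3)$ (uniform gradient bounds plus uniform $O(1/|x|)$ decay, as you note). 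The paper's iteration is constructive and its monotone structure is reused later, but your topological route is shorter and arguably more robust. Your uniqueness proof is the same maximum-principle idea as the paper's, just localized to the set $\{w>0\}$. For continuity in $\alpha$, the paper first proves the monotonicity $\theta_\beta\le\theta_\alpha$ for $\alpha\le\beta$ (its inequality \eqref{monotone}) and uses monotone sequences to promote subsequential convergence to full convergence, whereas you invoke the compactness-plus-uniqueness-plus-contradiction script; both work, and yours avoids proving the auxiliary monotonicity lemma (though that lemma is used again in the proof of Theorem \ref{AngryIvaThm}, so the paper gets additional mileage out of it). Your closing concern about the limit inheriting the boundary condition at infinity is correctly resolved by the $\alpha$-independent envelope $|\theta_{\alpha_n}-1|\le u_{\max}$; the paper handles the same point by passing to the limit in the integral representation formula, which is equivalent.
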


\begin{proof} 
We first establish existence and uniqueness. In the case of $\alpha=0$ there is nothing to show and so we proceed by fixing a value of $\alpha\in(0,1]$. For reasons of notational simplicity we temporarily drop $\alpha$ from the subscript. We implement  the strategy of \cite{NI} which is based on the recursive sequence 
$$ \theta_{m+1}(x):=1+\frac{G}{2c^2}\int_{\xi\in \R^3} \frac{\omega(\xi) }{|x-\xi| \theta_m^\alpha(\xi)},\ \ \theta_0(x)=1.$$
As in \cite{NI} one proves that the sequences $\theta_{2n}$ and $\theta_{2n+1}$ converge on all compacts to functions $\theta_-$ and $\theta_+$ satisfying $1\le \theta_-\le \theta_+$ and the asymptotic boundary conditions $\theta_\pm\to 1$. Furthermore, it follows that 
\begin{equation}\label{lastminutelabel}
\Delta_{\geucl}(k\theta_-+(1-k)\theta_+) =-4\pi \tfrac{G}{2c^2}\phi\cdot  \tfrac{k\theta_-^\alpha+(1-k)\theta_+^\alpha}{\theta_-^\alpha\theta_+^\alpha}
\end{equation}
for all constants $k$. 

If $\theta_-\neq \theta_+$, i.e if $\theta_-^\alpha<\theta_+^\alpha$ somewhere, then for some positive constant $k>1$ the function 
$$\theta_+^\alpha+k(\theta_-^\alpha-\theta_+^\alpha)=k\theta_-^\alpha+(1-k)\theta_+^\alpha$$
achieves a value less than $1$. By taking $k>1$ not too large, we may assume that $k\theta_-^\alpha+(1-k)\theta_+^\alpha$ reaches a positive interior minimum value. Next, we argue that $k\theta_-+(1-k)\theta_+$ for that specific value of $k$ reaches an interior minimum. There would be nothing to show if $k\theta_-+(1-k)\theta_+$ were to turn negative so we assume $k\theta_-+(1-k)\theta_+>0$ on $\R^3$. Since the function $x\mapsto x^\alpha$ is concave down, and since $k>1$, Jensen's Inequality implies 
$$(k\theta_-+(1-k)\theta_+)^\alpha\le k\theta_-^\alpha+(1-k)\theta_+^\alpha$$
over $\R^3$. It follows that the functions $(k\theta_-+(1-k)\theta_+)^\alpha$ and $k\theta_-+(1-k)\theta_+$ reach values -- and thus interior minimum values -- less than $1$. Note in addition that the function $k\theta_-+(1-k)\theta_+$ is not constant because it approaches $1$ at infinity. Since 
$$\Delta_{\geucl}(k\theta_-+(1-k)\theta_+)\le 0$$
due to \eqref{lastminutelabel}, the existence of the interior minimum value of 
the function $k\theta_-+(1-k)\theta_+$ contradicts the Strong Maximum Principle \cite{Jost}. This contradiction shows that $\theta_-=\theta_+$, and proves the existence of solutions of \eqref{TheAlphaEqn-modified} in the case of $b=1$. The uniqueness of solutions follows from the Strong Maximum Principle as in \cite{NI}, with very minor modifications to accommodate for the parameter $\alpha$.

We now focus on establishing continuity in the parameter $\alpha$. Fix $0\le \alpha<\beta\le 1$. The difference $\theta_\beta-\theta_\alpha$ satisfies 
$$\Delta_{\geucl}(\theta_\beta-\theta_\alpha)+4\pi\tfrac{G}{2c^2} \tfrac{\phi}{\theta_\alpha^\alpha\theta_\beta^\beta}\left(\theta_\alpha^\alpha-\theta_\beta^\alpha\right)=4\pi \tfrac{G}{2c^2} \tfrac{\phi}{\theta_\alpha^\alpha\theta_\beta^\beta}\left(\theta_\beta^\beta-\theta_\beta^\alpha\right),$$
which by the Mean Value Theorem and the fact that $\theta_\beta^\beta\ge \theta_\beta^\alpha\ge 1$ becomes
$$
\Delta_{\geucl}(\theta_\beta-\theta_\alpha)-4\pi \alpha \tfrac{G}{2c^2}\tfrac{\phi}{\theta_\alpha^\alpha\theta_\beta^\beta}\theta_*^{\alpha-1}\left(\theta_\beta-\theta_\alpha\right)\ge 0.
$$
It follows from the Strong Maximum Principle that $\theta_\beta-\theta_\alpha$ cannot reach a nonnegative interior maximum unless it is a constant. Given that $\theta_\beta-\theta_\alpha$ obeys a Dirichlet boundary condition, we obtain 
\begin{equation}\label{monotone}
1\le \theta_1\le \theta_\beta\le \theta_\alpha\le \theta_0.
\end{equation}

Now suppose that $\alpha_n\to \alpha$; without loss of generality we may assume that the sequence $\alpha_n$ is monotone. It follows from \eqref{monotone} that both $\theta_{\alpha_n}$ and $\frac{\phi}{\theta_{\alpha_n}^{\alpha_n}}$ are bounded in $L^2(K)$ for all compact subsets $K$. By the Interior Elliptic Regularity we see that $\theta_{\alpha_n}$ is bounded in $H^2(K)$ for all compact subsets $K$. By the Rellich Lemma and the Sobolev embedding we get a subsequential convergence of $\theta_{\alpha_n}$ to some $\theta\in C^0(K)$. However, monotonicity \eqref{monotone} ensures that the entire sequence 
$\theta_{\alpha_n}$ converges to $\theta$. The standard bootstrapping argument based on Interior Elliptic Regularity now shows that the convergence to $\theta$ happens in each and every $H^k(K)$. Taking the limit as $n\to \infty$ in the representation formula 
$$\theta_{\alpha_n}(x)=1+\frac{G}{2c^2}\int_\xi \frac{\omega(\xi)}{|x-\xi|\theta_{\alpha_n}(\xi)^{\alpha_n}}$$
shows that the limit $\theta$ solves \eqref{TheAlphaEqn-modified}. Since the said solutions are unique it must be that $\theta=\theta_\alpha$ and our proof is complete. 
\end{proof}

\subsection{Proof of Theorem \ref{AngryIvaThm}}

\begin{proof}
The uniqueness of solutions of \eqref{TheAlphaEqn-modified} follows by the same Strong Maximum Principle argument as in the proof of Lemma \ref{NoahThm}. The existence of solutions of \eqref{TheAlphaEqn-modified} in the cases when $b\neq 0$ is a consequence of Lemma \ref{NoahThm} because a function $\theta_{b,\alpha}$ serves as a solution of \eqref{TheAlphaEqn-modified} if and only if the function $\frac{\theta_{b,\alpha}}{b}$ serves as a solution of \eqref{TheAlphaEqn-modified} with $\omega$ replaced by $\frac{1}{b^{1+\alpha}}\omega$ and the asymptotic boundary condition replaced by $1$. The existence of solutions of \eqref{TheAlphaEqn-modified} for $b=0$ is established through continuity methods later on in this proof. Specifically, we show that $\lim_{b\to 0} \theta_{b,\alpha}$ exists, that it satisfies \eqref{zerocase} and that as such it defines $\theta_{0,\alpha}$.

Temporarily fix some $0<b<s\le 1$ and a value of $\alpha\in[0,1]$. Consider solutions $\theta_{b,\alpha}$ and $\theta_{s,\alpha}$ of \eqref{TheAlphaEqn-modified}. By the Mean Value Theorem the difference $\theta_{b,\alpha}-\theta_{s,\alpha}$ satisfies
$$\Delta_{\geucl}(\theta_{b,\alpha}-\theta_{s,\alpha})-\frac{4\pi\alpha\phi}{\theta_*^{\alpha+1}}\left(\theta_{b,\alpha}-\theta_{s,\alpha}\right)=0$$
for some positive function $\theta_*$. 
We see from the Strong Maximum Principle that $\theta_{b,\alpha}-\theta_{s,\alpha}$ cannot reach a nonnegative interior maximum unless it is a constant. Since $\theta_{b,\alpha}-\theta_{s,\alpha}\to b-s<0$ we arrive at $\theta_{b,\alpha}<\theta_{s,\alpha}$. This further gives $\theta_{b,\alpha}^\alpha< \theta_{s,\alpha}^\alpha$ which, when combined with the representation formula, yields  
$$\begin{aligned}
\theta_{b,\alpha}(x)=&b+\frac{G}{2c^2}\int \frac{\omega(y)}{|x-y|\theta_{b,\alpha}^\alpha(y)}\\
\ge & (b-s)+s+\frac{G}{2c^2}\int \frac{\omega(y)}{|x-y|\theta_{s,\alpha}^\alpha(y)}=(b-s)+\theta_{s,\alpha}(x).
\end{aligned}$$
Overall, we have 
\begin{equation}\label{cty-a}
0\le \theta_{s,\alpha}-\theta_{b,\alpha}\le s-b.
\end{equation}
Th estimate \eqref{cty-a} and the monotonicty formula \eqref{monotone} provide a lower bound $\theta_{b,\alpha}\ge \theta_{1,\alpha}-1\ge \theta_{1,1}-1$, valid for all $b\in (0,1]$ and all $\alpha\in[0,1]$.
For future purposes we note that  
\begin{equation}\label{lowerbound}
\theta_{b,\alpha}\big{|}_{\mathrm{supp}(\omega)}\ge \min_{\mathrm{supp}(\omega)}\frac{G}{2c^2}\int \frac{\omega(y)}{|x-y|\theta_{1,1}(y)} >0
\end{equation} 
for all $b\in (0,1]$ and all $\alpha\in[0,1]$. (Compare with Lemma \ref{WillsLemma3}.)

Next, temporarily fix $0\le \alpha<\beta\le 1$ and a value $b\in(0,1]$. Consider the solution $\theta_{\mathrm{aux}}$ of the problem 
$$\theta_{\mathrm{aux}}^\alpha \Delta_{\geucl} \theta_{\mathrm{aux}} =-4\pi \tfrac{G}{2c^2} \tfrac{1}{b^{1+\beta}}\phi,\ \ \theta_{\mathrm{aux}}\to 1.$$
Since $\alpha<\beta$ the monotonicity formula \eqref{monotone} implies 
$$\tfrac{\theta_{b,\beta}}{b}\le \theta_{\mathrm{aux}}.$$
Also note that 
$$\begin{cases}
&\left(b^{\frac{\beta-\alpha}{1+\alpha}}\theta_{\mathrm{aux}}\right)^\alpha
\Delta_{\geucl}\left(b^{\frac{\beta-\alpha}{1+\alpha}}\theta_{\mathrm{aux}}\right)=-4\pi \tfrac{G}{2c^2} \tfrac{1}{b^{1+\alpha}}\phi,\\
&b^{\frac{\beta-\alpha}{1+\alpha}}\theta_{\mathrm{aux}}\to b^{\frac{\beta-\alpha}{1+\alpha}} \text{\ \ with\ \ }b^{\frac{\beta-\alpha}{1+\alpha}}\le 1.
\end{cases}$$
The inequality \eqref{cty-a} further shows 
$$b^{\frac{\beta-\alpha}{1+\alpha}}\theta_{\mathrm{aux}}\le \tfrac{\theta_{b,\alpha}}{b}.$$ 
Since $b^{\frac{\alpha-\beta}{1+\alpha}}\le b^{\alpha-\beta}$ due to $b\in (0,1]$, we obtain $\theta_{\mathrm{aux}}\le b^{\alpha-\beta}\tfrac{\theta_{b,\alpha}}{b}$ and 
\begin{equation}\label{newmonotonicity}
\theta_{b,\beta}\le b^{\alpha-\beta}\theta_{b,\alpha},\ \ \text{i.e.}\ \ b^\beta\theta_{b,\beta}\le b^\alpha\theta_{b,\alpha}.
\end{equation}

At this stage we may repeat the argument from the end of the proof of Lemma \ref{NoahThm}, with monotonicity formula \eqref{newmonotonicity} replacing \eqref{monotone}. The conclusion is the continuity of the sequence $\theta_{b,\alpha}$ in $\alpha$ for each fixed $b\in(0,1]$. 

Finally, fix $\alpha\in[0,1]$ and consider a sequence $b_n\to b$. By \eqref{cty-a} we see that the sequence $\theta_{b_n,\alpha}$ is Cauchy in $C^0(K)$ for each compact set $K$. In fact, it is Cauchy \emph{uniformly} with respect to $\alpha$. Combining with \eqref{lowerbound} we obtain that both $\theta_{b_n,\alpha}$ and $\frac{\phi}{\theta_{b_n,\alpha}^{\alpha}}$ are uniformly Cauchy in $L^2(K)$ for each compact $K$. The Interior Elliptic Regularity and a standard bootstrapping argument show that $\theta_{b_n,\alpha}$ is uniformly Cauchy in each $H^k(K)$. The representation formula implies that the limit function $\theta$ satisfies 
\begin{equation}\label{zerocase}
\theta(x)=b+\frac{G}{2c^2}\int \frac{\omega(y)}{|x-y|\theta^\alpha(y)}.
\end{equation}

For $b>0$ the identity \eqref{zerocase} shows $\theta=\theta_{b,\alpha}$. In particular, we obtain continuity of $\theta_{b,\alpha}$ as a function of $b\in(0,1]$. Furthermore, this continuity is uniform in $\alpha$. In the case of $b=0$ we first use \eqref{zerocase} to establish the existence of solutions 
$$\theta_{0,\alpha}=\lim_{n\to \infty} \theta_{b_n,\alpha}$$
of \eqref{TheAlphaEqn-modified}. Once again, this limit is uniform in $\alpha$. Thus, the function $\theta_{b,\alpha}$ of $b\in[0,1]$ is continuous uniformly in $\alpha\in[0,1]$. The continuity of $\theta_{b,\alpha}$ as a function of $(b,\alpha)\in[0,1]\times [0,1]$ is now a consequence of the continuity of $\theta_{b,\alpha}$ as a function of $\alpha$ established earlier within this proof. 
\end{proof}

\subsection{The proof of Theorem \ref{Will-Iva-alphaThm}}
We are about to employ the strategy already used in the proof of Theorem \ref{Will-Iva-Thm1} with some slight modifications accounting for the parameter $\alpha$. Specifically, the (scaled) pullbacks we use in the proof of Theorem \ref{Will-Iva-alphaThm} are as follows:
\begin{definition}\label{pullba:defn} 
We define:
$$
\begin{cases}
&\Omega_{n,\alpha,p_j} = \Phi_{n,\alpha, p_j}\dvolg = \tfrac{1}{n^{\alpha}}\H_{n,p_j}^*\omega_{\P,\alpha,n},\\ 
&\Theta_{n,\alpha, p_j} = \tfrac{1}{n}\H _{n, p_j}^{*}\theta_{\P,\alpha, n}.
\end{cases}
$$
\end{definition}

The explicit expression for $\Omega_{n,p_j}$ now becomes
\begin{equation*}
\Omega_{n,\alpha, p_j}(x) = \sum_{i=1}^{Q}  a_i^{\alpha+1}\cdot \Omega_{0,\alpha}(x-n(p_i -p_j)).
\end{equation*}
Computation much like that employed in \eqref{BigRPP:eqn} shows that pullback of \eqref{littleGPP:eqn} under $\H_{n,p_j}$ is
$$
\Theta^{\alpha}_{n,\alpha,p_j} \Delta_{\geucl} \Theta_{n,\alpha,p_j} = -4\pi \frac{G}{2c^2}\Phi_{n,\alpha, p_j}, \  \lim_{|x| \to \infty} \Theta_{n,\alpha, p_j}(x) = \tfrac{1}{n}
$$
with $\Phi_{n,\alpha, p_j}$ and $\Theta_{n,\alpha, p_j}$ as in Definition \ref{pullba:defn}. The key to proving Theorem \ref{Will-Iva-alphaThm} is in showing that for each fixed $p_j$ the sequence of functions $\Theta_{n,\alpha, p_j}$  converges to $a_j\Theta_{0,\alpha}$ as $n\to \infty$. The latter in turn relies on the following convergence.

\begin{lemma}\label{massalpha}
Let $K$ be a compact subset of $\R^3$. If $n\ge N(K,\sigma)$ then $\Omega_{n,\alpha, p_j}=a_j^{\alpha+1} \Omega_{0,\alpha}$. In particular, we have 
$$\Omega_{n,\alpha, p_j} \to a_j^{\alpha +1} \Omega_{0,\alpha} \text{\ \ as\ \ } n\to \infty.$$
This convergence is uniform with all derivatives over all compact subsets of $\R^3$.
\end{lemma}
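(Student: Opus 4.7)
The plan is to imitate the proof of Lemma \ref{ivaslastminutebs} almost verbatim, since this lemma is its direct $\alpha$-analog. The only ingredients needed are the explicit formula
$$\Omega_{n,\alpha, p_j}(y) = \sum_{i=1}^{Q} a_i^{\alpha+1}\,\Omega_{0,\alpha}\bigl(y - n(p_i - p_j)\bigr)$$
recorded just before the lemma statement, together with the fact that $\Omega_{0,\alpha}$ is supported on the closed unit ball $B(0,1)$ (built into Definition \ref{newnormalization:definition}).

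First I would fix a compact set $K\subseteq \R^3$ and invoke the same choice of $N(K,\sigma)\in\mathbb{N}$ that was used in the proof of Lemma \ref{ivaslastminutebs}: namely, one large enough that $n\ge N(K,\sigma)$ forces
$$y\in K,\ i\neq j \ \Longrightarrow\ y-n(p_i-p_j)\notin B(0,1).$$
Such an $N(K,\sigma)$ exists because $|n(p_i-p_j)|\ge n\sigma \to \infty$ while $K$ and $B(0,1)$ are bounded. This is exactly where the separation parameter $\sigma = \sigma(\P)$ enters.

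Next, for $y\in K$ and $n\ge N(K,\sigma)$, every summand with $i\neq j$ in the formula above vanishes because the argument of $\Omega_{0,\alpha}$ lies outside $\mathrm{supp}(\Omega_{0,\alpha})\subseteq B(0,1)$. Only the $i=j$ term survives, and $y-n(p_j-p_j)=y$, so we obtain the pointwise identity
$$\Omega_{n,\alpha,p_j}(y) = a_j^{\alpha+1}\,\Omega_{0,\alpha}(y) \quad\text{for all } y\in K,\ n\ge N(K,\sigma).$$
Since this is literal equality (not merely smallness), the claimed convergence with all derivatives on $K$ is immediate; as $K$ was arbitrary, uniform $C^\infty$-convergence on all compact subsets of $\R^3$ follows.

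I do not anticipate any real obstacle: the lemma is essentially a bookkeeping statement whose substance lies entirely in the support condition on $\Omega_{0,\alpha}$. The parameter $\alpha$ only modifies the multiplicative coefficient, shifting $a_j^2$ (from Lemma \ref{ivaslastminutebs}) to $a_j^{\alpha+1}$, and plays no role in the support argument.
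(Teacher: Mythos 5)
Your proposal is correct and matches the paper's (implicit) argument exactly: the paper records the same explicit formula for $\Omega_{n,\alpha,p_j}$ and relies on the same support condition $\mathrm{supp}(\Omega_{0,\alpha})\subseteq B(0,1)$ together with the choice of $N(K,\sigma)$ introduced before Lemma \ref{ivaslastminutebs}, so that only the $i=j$ term survives on $K$. Nothing further is needed.
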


The inequality 
\begin{equation*}
\begin{aligned}
\Delta_{\geucl}(\Theta_{n,\alpha, p_j}^{\alpha +1}) =&(\alpha+1)\Theta^{\alpha}_{n, \alpha, p_j} \Delta_{\geucl} \Theta_{n,\alpha, p_j} + (\alpha+1) |d\Theta _{n,\alpha, p_j}|^2\\
\geq & -4\pi (\alpha+1) \frac{G}{2c^2}\Phi_{n, \alpha,p_j}.
\end{aligned}
\end{equation*}
when used in place of \eqref{keyinequality} and \eqref{BigDeal} gives us an upper bound on $\Theta_{n,\alpha, p_j}^{\alpha +1}$ and the following results.

\begin{lemma}\label{ThetaBounds:propa}\ 
\begin{enumerate}
\item The functions $\Theta_{n,\alpha, p_j}$ are bounded in $L^\infty(\R^3)$, independently of $n$ and our choice of $p_j$:
$$\|\Theta_{n,\alpha, p_j}\|_{L^\infty(\R^3)}\le M_+ \text{\ \ for all\ \ } n, p_j.$$
\medbreak
\item Given a compact set $K$ there exists a positive constant $M_-=M_-(K)$  so that 
\begin{equation*}
\Theta_{n, \alpha,p_j}(y) \ge a_j^{1+\alpha}\cdot M_{-}(K)
\end{equation*}
for all $n, p_j $ and all $y\in K$.
\medbreak
\item\label{part3} Let $\e>0$, and let $p_j$ be fixed. There exists $L>1$ such that for all $|y| > L$ there exists $N \in \mathbb{N}$ so that when $n\ge N$ we have
$\Theta_{n,\alpha, p_j}(y) < \e$.
\end{enumerate}
\end{lemma}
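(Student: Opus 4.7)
The plan is to re-run the proofs of Lemmas \ref{ThetaBounds:prop} and \ref{ThetaDBC:prop} essentially verbatim, with the only change being that the auxiliary function $\Theta^2_{n,p_j}$ is replaced by $\Theta^{\alpha+1}_{n,\alpha,p_j}$ and the numerical factor $2$ by $\alpha+1$. The key inequality recorded by the authors just above the statement plays the role that \eqref{keyinequality} played in the RPP case, and because $\Theta_{n,\alpha,p_j}$ still satisfies the asymptotic decay \eqref{asymptotic-conditions} so does $\Theta^{\alpha+1}_{n,\alpha,p_j}$. The Green's representation formula \eqref{GRF-main} therefore applies and converts the differential inequality into
\[
\Theta^{\alpha+1}_{n,\alpha,p_j}(y) \;\le\; \left(\tfrac{1}{n}\right)^{\alpha+1} + \tfrac{(\alpha+1)G}{2c^2} \int_\xi \frac{\Omega_{n,\alpha,p_j}(\xi)}{|y-\xi|}\,\dvolg.
\]

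For part (1), I would first observe that the explicit expression for $\Omega_{n,\alpha,p_j}$ given just before the statement, together with the fact that for $n$ large the translates $\Omega_{0,\alpha}(\cdot - n(p_i-p_j))$ have pairwise disjoint supports, yields the $n$- and $p_j$-independent bounds $\|\Phi_{n,\alpha,p_j}\|_{L^\infty} \le \bigl(\sum_i a_i^{\alpha+1}\bigr)\|\Phi_{0,\alpha}\|_{L^\infty}$ and $\|\Phi_{n,\alpha,p_j}\|_{L^1} = \bigl(\sum_i a_i^{\alpha+1}\bigr)\|\Phi_{0,\alpha}\|_{L^1}$. Feeding these into the displayed inequality and invoking Lemma \ref{WillsLemma3}(1) gives a uniform upper bound on $\Theta^{\alpha+1}_{n,\alpha,p_j}$, from which $M_+$ follows by taking the $(\alpha+1)$-th root. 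For part (2), I would apply Green's representation to $\Theta_{n,\alpha,p_j}$ itself, use the uniform upper bound $M_+$ from part (1) to push $\Theta^\alpha_{n,\alpha,p_j}$ out of the denominator, and then invoke Lemma \ref{massalpha} (which identifies $\Omega_{n,\alpha,p_j}$ with the fixed distribution $a_j^{\alpha+1}\Omega_{0,\alpha}$ once $n \ge N(K,\sigma)$) together with the lower bound in Lemma \ref{WillsLemma3}(2). The prefactor $a_j^{\alpha+1}$ appears naturally from this identification, and finitely many smaller values of $n$ can be absorbed into $M_-(K)$.

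For part (3) the strategy mirrors Lemma \ref{ThetaDBC:prop}. I would fix $\e>0$ and use the integral upper bound on $\Theta^{\alpha+1}_{n,\alpha,p_j}$ together with the change of variables $\xi = \eta - n(p_i-p_j)$ in each summand to split the estimate into an $i=j$ contribution and $i\neq j$ contributions. Choose $L$ so large that for $|y|>L$ and $\eta \in B(0,1)$ the factor $1/|y-\eta|$ is bounded by $2/L$, making the $i=j$ term smaller than $\e^{\alpha+1}/2$; then choose $N$ so large that the term $(1/N)^{\alpha+1}$ and each $i\neq j$ term (which benefits from the bound $|y+n(p_i-p_j)-\eta|\gtrsim n\sigma$) together stay below $\e^{\alpha+1}/2$. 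Taking the $(\alpha+1)$-th root of the resulting inequality yields $\Theta_{n,\alpha,p_j}(y)<\e$.

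I do not expect any genuine obstacle: the formal structure of all three arguments is identical to the $\alpha=1$ case, and the strict positivity and monotonicity of $t\mapsto t^{\alpha+1}$ on $[0,\infty)$ makes the passage between bounds on $\Theta$ and on $\Theta^{\alpha+1}$ routine. The one spot that requires a small amount of care is the ``fixed distribution'' hypothesis in Lemma \ref{WillsLemma3}(2) used in part (2); this is precisely why the authors isolated Lemma \ref{massalpha}, and once $n\ge N(K,\sigma)$ the distribution $\Omega_{n,\alpha,p_j}$ is genuinely fixed as $a_j^{\alpha+1}\Omega_{0,\alpha}$, so Lemma \ref{WillsLemma3}(2) applies directly with the desired prefactor.
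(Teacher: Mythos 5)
Your proposal is correct and is essentially the paper's own argument: the authors likewise replace $\Theta^2_{n,p_j}$ by $\Theta^{\alpha+1}_{n,\alpha,p_j}$ and the factor $2$ by $\alpha+1$ in the key differential inequality, then rerun the Green's-representation estimates of Lemmas \ref{ThetaBounds:prop} and \ref{ThetaDBC:prop} verbatim (with $\e^2$ becoming $\e^{\alpha+1}$ in part (3) and the prefactor $a_j^{1+\alpha}$ emerging exactly as you describe in part (2)). No gaps.
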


The steps outlined in Lemma \ref{ExUnT:prop} can be executed in the new framework as well, leading us to the following convergence. 

\begin{lemma}\label{ExUnT:propa}
For a fixed $p_j$ the sequence $\Theta_{n, \alpha,p_j}$ converges uniformly with all derivatives over all compact subsets of $\R^3$ to $a_j\Theta_{0,\alpha}$.
\end{lemma}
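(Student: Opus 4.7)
The plan is to follow the blueprint of the proof of Lemma \ref{ExUnT:prop}, making adjustments to accommodate the exponent $\alpha$. First, I would rewrite the pulled-back GPP in the linear-elliptic form
$$\Delta_{\geucl}\Theta_{n,\alpha,p_j} = -4\pi \tfrac{G}{2c^2}\,\Phi_{n,\alpha,p_j}\,\Theta_{n,\alpha,p_j}^{-\alpha},$$
fix a nested exhaustion $K_0\subseteq\mathrm{Int}(K'_0)\subseteq K'_0\subseteq K_1\subseteq\cdots$ of $\R^3$ by compact sets, and apply Interior Elliptic Regularity on these sets. Lemma \ref{ThetaBounds:propa} supplies a uniform $L^\infty$ upper bound on $\Theta_{n,\alpha,p_j}$ together with a uniform strictly positive lower bound on each compact set, so $\Theta_{n,\alpha,p_j}^{-\alpha}$ is bounded, and an analogue of Lemma \ref{hboundp} bounds $\Phi_{n,\alpha,p_j}$ in every $H^k$. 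A bootstrap then yields uniform $H^k(K_0)$ bounds on $\Theta_{n,\alpha,p_j}$ for every $k$.

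A diagonal Rellich argument across the exhaustion produces a subsequence (still denoted $\Theta_{n,\alpha,p_j}$) converging in $C^{l}_{\mathrm{loc}}(\R^3)$ for every $l$ to a limit $\Theta_\infty$. Using part \eqref{part3} of Lemma \ref{ThetaBounds:propa} and passing to the limit as in the proof of Lemma \ref{ExUnT:prop} gives $\lim_{|y|\to\infty}\Theta_\infty(y)=0$. Using Lemma \ref{massalpha} to pass to the limit in the equation, $\Theta_\infty$ satisfies
$$\Theta_\infty^{\alpha}\,\Delta_{\geucl}\Theta_\infty\,\dvolg = -4\pi\tfrac{G}{2c^2}\,a_j^{\alpha+1}\Omega_{0,\alpha},\qquad \lim_{|y|\to\infty}\Theta_\infty(y)=0.$$
A direct calculation shows that $a_j\Theta_{0,\alpha}$ solves this same problem, since $(a_j\Theta_{0,\alpha})^\alpha\Delta(a_j\Theta_{0,\alpha})=a_j^{\alpha+1}\Theta_{0,\alpha}^\alpha\Delta\Theta_{0,\alpha}$. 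Invoking the uniqueness statement in Theorem \ref{AngryIvaThm} (with $b=0$), we conclude $\Theta_\infty=a_j\Theta_{0,\alpha}$.

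To upgrade subsequential convergence to convergence of the full sequence, I would argue by contradiction exactly as in Lemma \ref{ExUnT:prop}: if some subsequence stayed uniformly away from $a_j\Theta_{0,\alpha}$ on a compact $K$, reapplying the extraction procedure to it would produce a sub-subsequence converging to $a_j\Theta_{0,\alpha}$, a contradiction. Convergence with all derivatives on compacts then follows by induction on $l$ in the Interior Elliptic Regularity estimate
$$\|\Theta_{n,\alpha,p_j}-a_j\Theta_{0,\alpha}\|_{H^{l+2}(K)} \lesssim \|\Phi_{n,\alpha,p_j}\Theta_{n,\alpha,p_j}^{-\alpha}-a_j^{\alpha+1}\Phi_{0,\alpha}(a_j\Theta_{0,\alpha})^{-\alpha}\|_{H^l(K')} + \|\Theta_{n,\alpha,p_j}-a_j\Theta_{0,\alpha}\|_{L^2(K')}.$$

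The main technical obstacle I anticipate is not the structure of the argument but the handling of the non-integer power $\Theta_{n,\alpha,p_j}^{-\alpha}$ when propagating Sobolev bounds and differences. This is where the uniform positive lower bound from Lemma \ref{ThetaBounds:propa} becomes essential: because $\Theta_{n,\alpha,p_j}$ takes values in a fixed compact subinterval of $(0,\infty)$ on each $K$, the function $t\mapsto t^{-\alpha}$ is smooth on the relevant range and composition preserves $H^k$ bounds via the chain rule together with Sobolev multiplication. With that observation in place, the $\alpha$-version reduces to essentially the same bookkeeping as the $\alpha=1$ case, and the proof of Theorem \ref{Will-Iva-alphaThm} itself then proceeds verbatim from Lemma \ref{ExUnT:propa} by repeating the Green's representation argument of Section \ref{TheProof:sec}, with $\Theta_{n,p_j}$ replaced by $\Theta_{n,\alpha,p_j}^{\alpha}$ in the denominator.
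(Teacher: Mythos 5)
Your proposal is correct and follows essentially the same route as the paper: uniform $H^k$ bounds from Lemma \ref{ThetaBounds:propa} and interior elliptic regularity, a diagonal Rellich extraction, identification of the limit as $a_j\Theta_{0,\alpha}$ via the scaling computation and the uniqueness statement of Theorem \ref{AngryIvaThm} with $b=0$, then the contradiction argument and elliptic bootstrap for full-sequence convergence with all derivatives. Your explicit remark on handling $t\mapsto t^{-\alpha}$ via the uniform positive lower bound is a detail the paper leaves implicit, but it is the right observation and nothing further is needed.
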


\begin{proof}
Upper bounds on $\norm{\Phi_{n, p_j } \Theta_{n, \alpha,p_j}^{-\alpha}}_{L^2}$ and $\norm{\Theta_{n,\alpha,p_j}}_{L^2}$ are obtained by applying estimates from Lemma \ref{ThetaBounds:propa}. In combination with Elliptic Regularity Estimates
$$
\norm{\Theta_{n,\alpha,p_j}}_{H^2(K')} \cle \norm{\Phi_{n, p_j } \Theta_{n, \alpha,p_j}^{-\alpha}}_{L^2(K)} + \norm{\Theta_{n,\alpha,p_j}}_{L^2(K)}
$$
and Rellich Lemma these upper bounds lead to a diagonal sequence $\Theta_{n,\alpha, p_j}^{(n)}$ which converges (uniformly with two derivatives over all compact subsets of $\R ^3$) to some function $\Theta_\infty$. By virtue of part \eqref{part3} of Lemma \ref{ThetaBounds:propa} we know that 
$ \Theta_{\infty}(y) =0$ as $|y|\to \infty$. In combination with Lemma \ref{massalpha} we further see that $\Theta_\infty$ solves the GPP
\begin{equation*}
\Theta_{\infty}^{\alpha} \Delta_{\geucl} \Theta_{\infty} \dvol_{\geucl} = -4\pi \tfrac{G}{2c^2} a_j^{\alpha+1}\Omega_{0,\alpha},  \ \lim_{|x| \to \infty} \Theta_{\infty}(x) =0.
\end{equation*} 
Theorem \ref{AngryIvaThm} states that solutions to the above are unique, i.e that $\Theta_\infty=a_j \Theta_{0,\alpha}$. The remainder of the proof proceeds exactly as in Lemma \ref{ThetaBounds:propa}. 
\end{proof}

To complete the proof of Theorem \ref{Will-Iva-alphaThm} we employ the Green's representation formula:
$$\begin{aligned}
\theta_{\P,\alpha,n}(x)=&1+\frac{G}{2c^2}\int_{\xi} \frac{\omega_{\P,\alpha,n}(\xi)}{|x-\xi|\theta^{\alpha}_{\P,\alpha,n}(\xi)}\\ 
=&1+\frac{G}{2c^2}\sum_{i=1}^{Q} a_i^{\alpha+1}\int_{\xi} \frac{(\H_{n,p_i})_{*}\Omega_{0,\alpha}(\xi)}{|x-\xi|\cdot \frac{1}{n^{\alpha}}\theta^{\alpha}_{\P,\alpha,n}(\xi)}\\
=& 1+ \frac{G}{2c^2}\sum_{i=1}^{Q} a_i^{\alpha+1}\int_{\nu} \frac{\Omega_{0,\alpha}(\nu)}{|x-p_i - \nu/n|\,\Theta^{\alpha}_{n, \alpha,p_i}(\nu)}.
\end{aligned}
$$
As a corollary of Lemma \ref{niceone:lemma} and Lemma \ref{ExUnT:propa} we now have 
$$
\left\|\partial_x^l\left(\frac{\Omega_{0,\alpha}(\nu)}{|x - \nu/n|\Theta_{n, p_j,\alpha}^\alpha(\nu)}\right) - \partial_x^l\left(\frac{\Omega_{0,\alpha}(\nu)}{|x|\cdot a_j^\alpha\,\Theta_{0,\alpha}^\alpha(\nu)}\right)\right\|_{L^\infty}\to 0
$$
as $n\to \infty$. Note that for any fixed $r_0>0$ the convergences are uniform with respect to $x\in \R^3\smallsetminus B(0,r_0)$ and that
$$\int_{\nu} \frac{\Omega_{0,\alpha}(\nu)}{|x-p_i - \nu/n|\Theta^{\alpha}_{0, \alpha}(\nu)}=1$$
due to $\alpha$-normalization (see Definition \ref{newnormalization:definition}). 
Overall, we obtain the convergence of $\theta_{\P,\alpha,n}$, uniform with all derivatives on all compact subsets of $\R^3\smallsetminus\{p_1, .., p_Q\}$, towards 
$$1+ \frac{G}{2c^2}\sum_{i=1}^{Q} \frac{a_i}{|x-p_i|}.$$

\section{Concluding remarks}
To complete the program of representing a relativistic cloud of matter as a cumulative effect of point-sources we in addition need to execute the following.

\begin{itemize}
\item\emph{Discretization of sources}. For a given matter distribution $\omega$ there needs to be way of associating a suitable sequence $\P_Q=\{ (p_1, a_1), ... , (p_{Q}, a_{Q})\}$ indexed by $Q$ so that in some kind of limit as $Q\to \infty$ one recovers $\omega$. Note that one does not expect $a_i$ to simply be $\omega\big{|}_{p_i}$ because of non-linear interaction effects. Instead, since the conformal factor $\theta$ mimics the gravitational potential it is expected that the parameters $a_i$ are related to $\Delta_{\geucl}\theta\big{|}_{p_i}$. In other words, it is expected that the computation of parameters $a_i$ relies on the RPP. However, it would be far more optimal to have an iterative algebraic (and in particular non-PDE-based!) algorithm which determines $a_i$ based on the values of $\omega\big{|}_{p_j}$ for various $j$. We are in the process of developing such an algorithm, along with a theorem which quantifies the extent to which the parameters $a_i$ approximate $\Delta_{\geucl}\theta\big{|}_{p_i}$.
\medbreak
\item \emph{Employment of limits}. Ideally, for a given matter distribution $\omega$ and an approximating sequence $\P_Q$ discussed above one would also have a theorem along the following lines: the limit as $Q\to \infty$ of superpositions associated with $\P_Q$ in the sense of our Definition \ref{dust:defn} is the matter $\omega$ being discretized in the first place. A paper \cite{TI} is being written on this subject; it employs the concept of intrinsic flat limits \cite{SW-JDG}.
\end{itemize}

\bibliographystyle{plain}

\end{document}